\documentclass[lettersize,journal]{IEEEtran}
\usepackage{amsmath,amsfonts}
\usepackage{algorithmic}
\usepackage{algorithm}
\usepackage{array}

\usepackage{subcaption}
\usepackage{textcomp}
\usepackage{stfloats}
\usepackage{url}

\usepackage{verbatim}
\usepackage{graphicx}
\usepackage{cite}
\usepackage{xcolor}

\usepackage{pifont}
\usepackage{booktabs}
\usepackage{tcolorbox}
\usepackage{amsthm}
\usepackage{hyperref}
\usepackage{subcaption}
\usepackage{fontawesome5}

\definecolor{yellowgreen}{HTML}{ddffbf}
\definecolor{orange}{HTML}{fdb59c}
\definecolor{fedblue}{HTML}{ddffbf }

\hyphenation{op-tical net-works semi-conduc-tor IEEE-Xplore}

\begin{document}

\title{FedShield-LLM: A Secure and Scalable Federated Fine-Tuned Large Language Model}
\author{Md Jueal Mia and M. Hadi Amini,~\IEEEmembership{Senior Member,~IEEE}%
\thanks{Md Jueal Mia and M. Hadi Amini are with the Knight Foundation School of Computing and Information Sciences, Florida International University (FIU), Miami, FL 33199 USA. They are also with the Security, Optimization, and Learning for InterDependent networks laboratory (solid lab) at FIU (e-mail: mmia001@fiu.edu; moamini@fiu.edu).}%
}

\markboth{}%
{Shell \MakeLowercase{\textit{et al.}}: A Sample Article Using IEEEtran.cls for IEEE Journals}


\maketitle

\begin{abstract}
Federated Learning (FL) offers a decentralized framework for training and fine-tuning Large Language Models (LLMs) by leveraging computational resources across organizations while keeping sensitive data on local devices. It addresses privacy and security concerns while navigating challenges associated with the substantial computational demands of LLMs, which can be prohibitive for small and medium-sized organizations. FL supports the development of task-specific LLMs for cross-silo applications through fine-tuning but remains vulnerable to inference-related risks that threaten sensitive information. Prior studies have utilized Differential Privacy (DP) in LLM fine-tuning, which, despite being effective at preserving privacy, can degrade model performance. To overcome these challenges, we propose FedShield-LLM which integrates pruning with Fully Homomorphic Encryption (FHE) applied to Low-Rank Adaptation (LoRA) parameters. This combination enables secure computation over encrypted model updates and reduces the attack surface by deactivating less important LoRA parameters. Furthermore, optimized federated algorithms for cross-silo environments enhance scalability and efficiency. Parameter-efficient fine-tuning techniques like LoRA substantially reduce computational and communication overhead, making FL feasible for resource-constrained clients. Extensive experiments using Llama-2 models (7B and 13B) on four diverse datasets demonstrate that FedShield-LLM achieves superior collaborative performance and system efficiency compared to existing methods, supporting practical deployment across multiple domains. \\{\color{blue}The code and data are  available at \href{https://github.com/solidlabnetwork/fedshield-llm}{https://github.com/solidlabnetwork/fedshield-llm}.}
\end{abstract}

\begin{IEEEkeywords}
Federated Learning, Large Language Models, Security, Cross-Silo Applications, Fully Homomorphic Encryption, Low-Rank Adaptation.
\end{IEEEkeywords}

\section{Introduction}
\IEEEPARstart{L}{arge} Language Models (LLMs) such as GPT~\cite{brown2020language}, PaLM~\cite{chowdhery2023palm}, and ChatGPT~\cite{OpenAI2023} have revolutionized natural language processing by achieving state-of-the-art results across a wide array of language understanding and generation tasks. These foundation models are trained on large-scale public datasets, which enables strong performance in dialogue generation, summarization, and question answering. However, as LLMs are increasingly fine-tuned for specific domains such as healthcare and finance, they must incorporate sensitive or proprietary data, raising critical privacy and compliance concerns. Regulations like the GDPR and CCPA restrict data sharing and impose legal obligations on organizations handling private information~\cite{baik2020data, boyne2018data, das2024security}. Moreover, recent studies show that LLMs can memorize and leak training data~\cite{carlini2021extracting}, which presents significant risks when models are trained or fine-tuned on confidential inputs. This makes it imperative to develop solutions that preserve data locality and ensure privacy during domain adaptation.

Training and fine-tuning LLMs also pose substantial computational challenges. Models like GPT-3 require hundreds of GPU-years and millions of dollars in compute resources~\cite{LambdaLabs2020}, limiting accessibility to large tech firms. Even with the release of open-source LLMs like LLaMA~\cite{touvron2023llama}, resource bottlenecks remain, especially for smaller organizations or in distributed collaborative scenarios. Traditional full-model fine-tuning methods involve transmitting large model updates or datasets, which can be impractical in resource-constrained environments typically found in centralized training scenarios. FL \cite{mcmahan2017communication} addresses these challenges by enabling decentralized training, where clients perform local updates on private data and only share model gradients or parameters with a central server. FL can  enhance privacy and  efficiency of LLMs \cite{amini2025distributed}. FL has been explored in resource-constrained and edge scenarios~\cite{imteaj2021survey}, and recent works such as  OpenFedLLM~\cite{r3_ye2024openfedllm} and FederatedScope-LLM~\cite{r2_kuang2024federatedscope} have shown that FL can be adapted for LLM instruction tuning and domain-specific applications.

However, standard FL remains vulnerable to efficiency and security issues. Transmitting full-model updates for large models is bandwidth-intensive and often infeasible. Additionally, model updates can leak private information via inference or gradient inversion attacks (GIA)~\cite{zhu2019deep, gu2022cs, petrov2024dager}. To address the communication overhead, researchers have proposed parameter-efficient tuning methods such as LoRA~\cite{hu2022lora}, which inserts low-rank matrices into transformer layers to reduce trainable parameters. LoRA enables clients to only update lightweight adapter layers instead of the full model, drastically reducing computation and communication costs. This makes LoRA highly compatible with federated fine-tuning frameworks~\cite{wang2024flora}. Nonetheless, privacy concerns persist even with reduced update sizes, as model updates can still be exploited by malicious servers or compromised clients.

To ensure end-to-end privacy in federated LLM training, several advanced privacy-preserving techniques have been proposed. Differential Privacy (DP)~\cite{dwork2014algorithmic} introduces noise to model updates to provide statistical guarantees but often compromises model accuracy, especially in high-dimensional settings like LLMs~\cite{liu2023differentially, xu2024dp}. FHE~\cite{cheon2017homomorphic} enables computation over encrypted data, offering stronger confidentiality without requiring noise, though at the cost of increased computation. Secure Multi-Party Computation (MPC)~\cite{bonawitz2017practical} enables joint computation without revealing data but also faces scalability limitations. Meanwhile, model pruning~\cite{wu2021adversarial} can improve efficiency and mitigate attack surfaces by eliminating redundant parameters. Although most prior privacy-preserving LLM frameworks focus on inference-time security in centralized settings, they offer complementary insights relevant to secure model design. For instance, PrivacyAsst~\cite{zhang2024privacyasst} leverages HE to safeguard sensitive user inputs during tool-based LLM inference, illustrating how encrypted computation can be integrated into LLM workflows at runtime. Similarly, InferDPT~\cite{tong2025inferdpt} applies local DP to perturb user prompts submitted to black-box LLMs and reconstructs coherent outputs using a local extraction model.
Orthogonal to these privacy-preserving inference frameworks, decoding methods such as nucleus/top-p, Min-p, and geometry-aware Top-W improve generation behavior by shaping the next-token distribution at inference time~\cite{holtzman2020curious,nguyen2025turning,davoodi2026geometry}.

Building on these insights, we propose FedShield-LLM, a novel federated fine-tuning framework that combines FL, LoRA, FHE, and pruning to ensure scalable, secure, and regulation-compliant training of LLMs across decentralized private datasets. The primary motivation is to develop a framework that enables efficient LLM training on edge devices using distributed learning methods. Centralized training is computationally expensive and also raises serious data privacy concerns. In addition, real-time distributed training and inference introduce significant data security challenges. Existing DP based approaches suffer from the inherent trade-off between privacy and utility, where added noise degrades model performance. In this work, we detail the design of FedShield-LLM and demonstrate its effectiveness in achieving high performance with strong privacy guarantees. The main contributions of our study are listed below.

\begin{itemize}
    \item We propose \textbf{FedShield-LLM}, a novel federated fine-tuning framework for LLMs that jointly integrates FHE, LoRA, and unstructured pruning. To the best of our knowledge, this is the first work to explore the combination of FHE and pruning in federated LLM fine-tuning, enabling secure aggregation in the encrypted domain while mitigating privacy leakage during both training and inference phases. Our framework provides robustness against inference attacks under adversarial settings, including an honest-but-curious server.

    \item FedShield-LLM improves feasibility by leveraging LoRA to update only lightweight adapter layers instead of full model parameters, significantly reducing computational and memory overhead. This design makes secure encrypted fine-tuning practical even for large-scale LLMs and resource-constrained environments.

    \item We validate FedShield-LLM through extensive experiments using base models \textit{meta-llama/Llama-2-7b-hf} and \textit{meta-llama/Llama-2-13b-hf} across diverse datasets, including \textit{medalpaca/medical\_meadow\_medical\_flashcards}, \textit{vicgalle/alpaca-gpt4}, \textit{TIGER-Lab/MathInstruct}, and \textit{FinGPT/fingpt-sentiment-train}. The results demonstrate that our framework consistently achieves strong performance while providing enhanced security compared to existing approaches.
\end{itemize}

The remainder of this paper is structured as follows: Section II provides a comprehensive review of the related literature. Section III introduces the preliminaries essential for understanding the proposed method. Section IV details the proposed FedShield-LLM methodology, followed by Section V, which presents the experimental setup and results. Section VI offers further discussion, including limitations and implications. Finally, Section VII concludes the paper and outlines directions for future work.
\section{Literature Review}
\label{sec:litreview}

Recent advancements in machine learning have stimulated significant interest in combining FL with LLMs to enable collaborative model development while preserving data privacy. Researchers have explored several approaches to make this integration both practical and secure. Prior studies have examined federated and distributed fine-tuning strategies, and have also integrated differential privacy, homomorphic encryption (HE), or secure multi-party computation to reduce information leakage during distributed training.
While each of these methods contributes valuable capabilities, they also come with limitations highlighting the need for a comprehensive framework like FedShield-LLM that unifies efficiency, privacy, and robustness.

 Recent works have introduced efficient frameworks for distributed and federated LLM adaptation. OpenFedLLM~\cite{r3_ye2024openfedllm} proposes a comprehensive pipeline integrating federated instruction tuning and value alignment for adapting models like LLaMA2-7B across multiple clients without data centralization. It supports various FL strategies, including FedAvg, and demonstrates that federated fine-tuned models can match the performance of central LLMs like GPT-4 on specialized tasks. FedBiOT~\cite{r4_wu2024fedbiot} addresses client-side limitations using a bi-level optimization scheme in which clients fine-tune compressed LLM emulators (lightweight LoRA modules), while servers maintain alignment with full models. This drastically reduces local resource demands. FederatedScope-LLM~\cite{r2_kuang2024federatedscope} is a modular open-source framework that provides unified support for various LLM FL scenarios, including adapter-based tuning and diverse FL algorithms. Separately, FlexLoRA~\cite{bai2024federated} proposes an adaptive LoRA strategy in heterogeneous FL environments: each client trains with a custom LoRA rank, and the server merges these updates using singular value decomposition (SVD) to synthesize a full-rank global model. This avoids bottlenecks from underpowered clients and improves performance across diverse clients and tasks. FedCoLLM~\cite{fan2024fedcollm} further extends this line by co-tuning LLMs and smaller SLMs using LoRA, Secure Aggregation~\cite{bonawitz2016practical}, and Knowledge Distillation~\cite{hinton2015distilling}, achieving less than 0.25\% of full model communication while preserving performance. Yun et al.~\cite{yun2023privacy} introduce a hierarchical clustered sampling method to address non-IID data, combining within-cluster aggregation and multinomial participation to improve fairness and stability. A recent survey~\cite{wu2025survey} categorizes distributed fine-tuning strategies like knowledge distillation and split learning, which offer trade-offs between privacy, performance, and communication. While these frameworks enable distributed LLM adaptation, most assume honest clients and servers, and lack robust protections against inference attacks.

DP has been widely explored for protecting data contributions in FL. DP-LoRA~\cite{liu2023differentially} applies Gaussian noise to LoRA adapter weights, enabling formal $(\epsilon,\delta)$ privacy guarantees while maintaining model accuracy. This works well due to LoRA’s compressed structure, which reduces the noise scale required. DP-DyLoRA~\cite{xu2024dp} extends this by dynamically adjusting adapter ranks, integrating rank-sensitive noise mechanisms to further optimize the privacy-utility tradeoff. Cross-domain evaluations (e.g., speech, vision, text) show that DP-DyLoRA maintains  $<2\%$ loss in performance even at $\epsilon=2$ with one million clients. Yu et al.~\cite{yu2021differentially} found that parameter-efficient fine-tuning (PEFT) techniques (e.g., adapters) inherently offer stronger DP tradeoffs versus full fine-tuning. DP offers strong protection only when the privacy budget is extremely small, but such settings significantly degrade model utility. This fundamental privacy–utility trade-off limits DP’s ability to prevent memorization and gradient-based leakage, making it insufficient as a standalone privacy mechanism.

 HE offers an orthogonal privacy defense by allowing secure aggregation of encrypted model updates. Frameworks such as PrivTuner ~\cite{li2024privtuner} a centralized fine-tuning framework that integrates FHE with LoRA to enable secure and efficient fine-tuning of AI foundation models. Unlike FL, which relies on decentralized training, PrivTuner ensures data privacy by performing computations on encrypted client data directly on the server, eliminating the need for raw data transmission while maintaining model performance. FHE offers robust security against inference attacks during LLM fine-tuning in an FL environment. However, security challenges can arise in cross-silo FL scenarios, particularly when the server evaluates the performance of the LLM. In such cases, honest-but-curious clients or servers may attempt to infer sensitive information by analyzing the LoRA parameters, which contain knowledge learned from private data.

 MPC protocols such as secret sharing and garbled circuits enable joint computation without revealing private data. Google’s secure aggregation protocol~\cite{bonawitz2017practical} masks individual updates so only the final sum is revealed to the server. These systems strengthen client privacy but suffer from high latency and are hard to scale to models with billions of parameters.

Beyond LLM focused works, recent enterprise FL studies also explore privacy preserving learning. For example, FedAnil \cite{fotohi2024decentralized} integrates CKKS FHE, blockchain randomness, and clustering (cosine similarity and affinity propagation) to defend against poisoning, collusion, and inference attacks under non IID enterprise data. A related lightweight framework \cite{fotohi2024lightweight} similarly combines CKKS FHE with gradient clustering and entropy based compression to reduce communication cost while preserving privacy. Although these methods demonstrate the benefit of coupling HE with compression, they are developed for standard deep models and not large language models.

Unlike DP-LoRA, which uses noise to guarantee privacy at some cost to utility, our method avoids accuracy loss by using FHE for exact aggregation. Based on prior work and our experimental observations, DP-based LoRA methods introduce a clear privacy--utility trade-off, where noise injection into model updates can lead to performance degradation and training instability, especially under strong privacy constraints. Moreover, whereas PrivTuner applies FHE and LoRA in a single-server setting for data sharing, FedShield-LLM operates in a federated multi-client environment and introduces pruning to further reduce attack surface and overhead. In contrast to DP-based approaches, FedShield-LLM adopts a fundamentally different paradigm by leveraging FHE to perform secure aggregation directly in the encrypted domain without perturbing model updates, thereby preserving gradient integrity and model performance while ensuring strong cryptographic privacy guarantees. Table~\ref{tab:comparison_methods} provides a concise comparison of existing LLM fine-tuning approaches, highlighting differences in distribution settings, underlying security mechanisms, privacy guarantees, attack resiliency, and utility tradeoffs. It integrates LoRA-based efficient fine-tuning with FHE and model pruning: LoRA minimizes the size of model updates, making FHE computationally feasible, while pruning sparse the model and reduces the attack surface. As a result, FedShield-LLM achieves a more favorable privacy--utility trade-off compared to DP-based methods, clearly demonstrating its effectiveness and practical advantage. Together, these techniques form a comprehensive framework for secure, efficient, and robust collaborative fine-tuning of LLMs in adversarial environments.

\begin{table*}[t]
\centering
\caption{Comparison of Secure LLM Fine-tuning Methods}
\resizebox{\textwidth}{!}{
\begin{tabular}{l|c|c|c|c|c|c|c}
\hline
\textbf{Name} &
\textbf{Security Mechanism} &
\textbf{Distributed} &
\textbf{Data Privacy} &
\textbf{Inference Attack Resiliency} &
\textbf{Scalability} &
\textbf{Computational Overhead} &
\textbf{Performance Tradeoff} \\ \hline

OpenFedLLM \cite{r3_ye2024openfedllm}
& None
& \ding{51}
& \ding{51}
& \ding{55}
& \ding{51}
& No overhead (Vanilla baseline)
& \ding{51} High Utility \\ \hline

DP-LoRA \cite{liu2023differentially}
& DP
& \ding{51}
& \ding{51}
& \ding{51}
& \ding{51}
& Low overhead
& \ding{55} Accuracy drops when $\epsilon$ small \\ \hline

PrivTuner \cite{li2024privtuner}
& HE
& \ding{55}
& \ding{51}
& \ding{55}
& \ding{55}
& High overhead due to data encryption
& \ding{55} Heavy compute; Not Federated \\ \hline

FedShield-LLM (Ours)
& HE + Pruning
& \ding{51}
& \ding{51}
& \ding{51}
& \ding{51}
& Moderate overhead for encryption/decryption operations
& \ding{51} Scalable; negligible Utility Drop \\ \hline

\end{tabular}
}
\label{tab:comparison_methods}
\end{table*}
\section{Preliminaries} 

\subsection{Federated Fine-Tuning of LLMs}\label{sec:federated-llm}

FL enables a set of $N$ clients $\{C_1, C_2, \ldots, C_N\}$ to collaboratively fine-tune a shared global LLM with parameters $w_t$ at round $t$, using decentralized datasets $\{D_1, D_2, \ldots, D_N\}$, without sharing raw data. Each client $C_i$ receives the current global model $w_t$ and performs local training to compute a LoRA-based model update $\Delta w_i$, which represents the client-specific parameter change:
\begin{equation}
\Delta w_i := \arg\min_{\Delta w} \; L_i\big(f_{w_t} + \Delta w, D_i\big),
\end{equation}
where $L_i$ is the local loss computed over dataset $D_i$. After local updates, the server aggregates the received client updates using a method such as FedAvg~\cite{mcmahan2017communication}. Assuming equal weight for each client, the global model is updated as:
\begin{equation}
w_{t+1} = w_t + \frac{1}{N} \sum_{i=1}^{N} \Delta w_i,
\end{equation}
and the updated model $w_{t+1}$ is redistributed to clients for the next communication round.

In practice, transmitting full gradients or model updates is resource-intensive, especially for large LLMs. To mitigate this, recent FL methods adopt PEFT strategies such as LoRA~\cite{hu2022lora}, which restrict training to low-dimensional subspaces within transformer layers. In LoRA, the update $\Delta w_i$ for each weight matrix is represented as a low-rank factorization: $\Delta w_i = A_i \cdot B_i$, where $A_i \in \mathbb{R}^{d \times r}$ and $B_i \in \mathbb{R}^{r \times k}$. Since $r \ll \min(d, k)$, this reduces the number of trainable parameters

\subsection{Threat Model}\label{sec:threat-model}
Even though raw data never leaves local devices in FL, the exchanged model updates $\Delta w_i$ can leak sensitive information about client $C_i$’s dataset $D_i$. We consider an inference attack threat model wherein an adversary observes gradients or model updates and attempts to infer private data. The adversary can be an honest-but-curious server (a semi-honest aggregator that faithfully executes FL but analyzes received updates for information) or dishonest clients that deviate from the protocol. For example, a curious central server could try to invert gradients to reconstruct a client's training examples. Attacks such as Deep Leakage from Gradients (DLG)~\cite{zhu2019deep} and other GIA~\cite{geiping2020inverting} have demonstrated that given $\Delta w_i$, an attacker can often reconstruct the original inputs used to compute that update. Similarly, an adversary might perform membership inference to determine if a certain sample was in $D_i$. 

We can formalize the privacy breach by the probability $\Pr[\mathcal{A}(\Delta w_i) = x]$ that an adversary $\mathcal{A}$, given model update $\Delta w_i$, correctly infers a private data point $x \in D_i$. An effective attack means this probability is much higher than random chance (i.e., $\gg 1/|X|$ for input domain $X$):
\begin{equation}
\Pr\big[\mathcal{A}(\Delta w_i) = x\big] \gg \frac{1}{|X|}~,
\end{equation}
indicating a significant privacy risk.

To mitigate such threats, a strong privacy-preserving approach is the cryptographic encryption of model updates using FHE. FHE schemes—such as CKKS, based on the Ring Learning With Errors (Ring-LWE) problem~\cite{cheon2017homomorphic}; enable computations directly over encrypted data. In the context of FL, each client encrypts its model update $\Delta w_i$ and sends it to the server, which then performs aggregation (e.g., summing or averaging) without ever decrypting the individual updates. The result is a ciphertext of the aggregated model, which can then be decrypted by an authorized party. This process ensures that even an honest-but-curious server or external adversary cannot infer any client’s private information from the model updates. Since FHE prevents direct access to individual gradients or parameter updates, it is highly effective at mitigating gradient inversion attacks and membership inference threats, offering strong confidentiality guarantees in adversarial FL settings.

\subsection{Applications}\label{sec:applications}
Secure federated fine-tuning of LLMs is useful in domains where data is sensitive, regulated, or proprietary. Below we highlight key application areas.

\subsubsection{Healthcare}
Hospitals can collaboratively fine-tune LLMs on electronic health records (EHR) to build clinical QA systems or decision-support tools while complying with privacy regulations such as HIPAA and GDPR. FL enables models to benefit from multi-institution data without sharing raw patient records, and prior work demonstrates strong performance on clinical NLP tasks under federated settings~\cite{zhang2025federated}.

\subsubsection{Finance}
Banks and financial institutions can jointly train LLMs on sensitive data such as transaction logs, fraud alerts, or risk reports without exposing proprietary information. Federated fine-tuning enables richer financial modeling from distributed data silos while maintaining confidentiality, and FL has been shown to improve fraud detection and credit risk prediction with appropriate privacy safeguards~\cite{liu2023efficient}.

\subsubsection{Autonomous Vehicles (AV)}
Federated learning allows intelligent vehicles to train perception and decision-making models using learned representations from diverse driving environments without transmitting raw sensor data. This improves robustness while preserving user privacy, supporting tasks such as lane detection, obstacle avoidance, and traffic sign recognition~\cite{zhang2024federated}. Federated fine-tuning of LLMs may further enhance AV systems by

\begin{table*}[htbp]
\centering
\caption{Notations}
\label{tab:notations}
\begin{tabular}{llll}
\toprule
\textbf{Notation} & \textbf{Representation} &
\textbf{Notation} & \textbf{Representation} \\
\midrule

$N$ & Total number of clients &
$R$ & Total number of communication rounds \\

$t$ & Current communication round &
$D_i$ & Local dataset of client $i$ \\

$D$ & Set of all local datasets &
$p_i$ & Aggregation weight for client $i$ \\

$p_t$ & Pruning rate at round $t$ &
$p_0,p_{\text{target}}$ & Initial and target pruning rates \\

$t_{\text{eff}}, t_{\text{target}}$ & Pruning schedule params &
$n_t$ & Participating clients at round $t$ \\

$w_t$ & Global model at round $t$ &
$w_i$ & Local model of client $i$ \\

$\Delta w_i$ & LoRA update of client $i$ &
$\Delta w_i^r$ & LoRA update at round $r$ \\

$\Delta w_i^p$ & Pruned LoRA update &
$\Delta w_{i,\text{prune}}^r$ & Pruned update at round $r$ \\

$A_i,B_i$ & LoRA low-rank matrices &
$d,r$ & LoRA dimensions \\

$m_i$ & Pruning mask for client $i$ &
$e_i^r$ & Pruning error at round $r$ \\

$\epsilon_r$ &
  Bound on $\|e_i^r\|$ &
$\bar{w}_t$ & Aggregated global update \\

$\bar{\Delta}^r$ &
  Aggregated pruned update &
$\bar{\Delta}_{\text{full}}^r$ &
  Aggregated unpruned update \\

$\bar{e}^r$ &
  Aggregated pruning error &
$F(w)$ & Global FL objective \\

$F_i(w)$ & Local objective function &
$\eta$ & Learning rate \\

$L$ &
  Smoothness constant &
$\sigma^2$ &
  Gradient variance bound \\

$\min_{0\le r < R}\|\nabla F(w^r)\|^2$ &
  Stationarity measure &
$C$ &
  Constant depending on $L,\sigma^2$ \\

$\mu$ &
  PL/strong convexity const. &
\textsf{CKKS} & Homomorphic encryption scheme \\

\textsf{FHE} & Fully Homomorphic Encryption &
$\mathcal{C}$ & CKKS encryption context \\

$n_c$ & \# ciphertexts for LoRA packing &
$N_{\text{poly}}$ & CKKS polynomial degree \\

$\mathcal{A}$ & Adversarial algorithm &
$\Pr[\mathcal{A}(\Delta w_i)=x]$ &
  Probability of data reconstruction \\

\bottomrule
\end{tabular}
\end{table*}

\section{Methodology} 

\subsection{Method}
\label{sec:Methodology}

\begin{figure*} [!t]
    \centering
    \includegraphics[width=0.80 \linewidth]{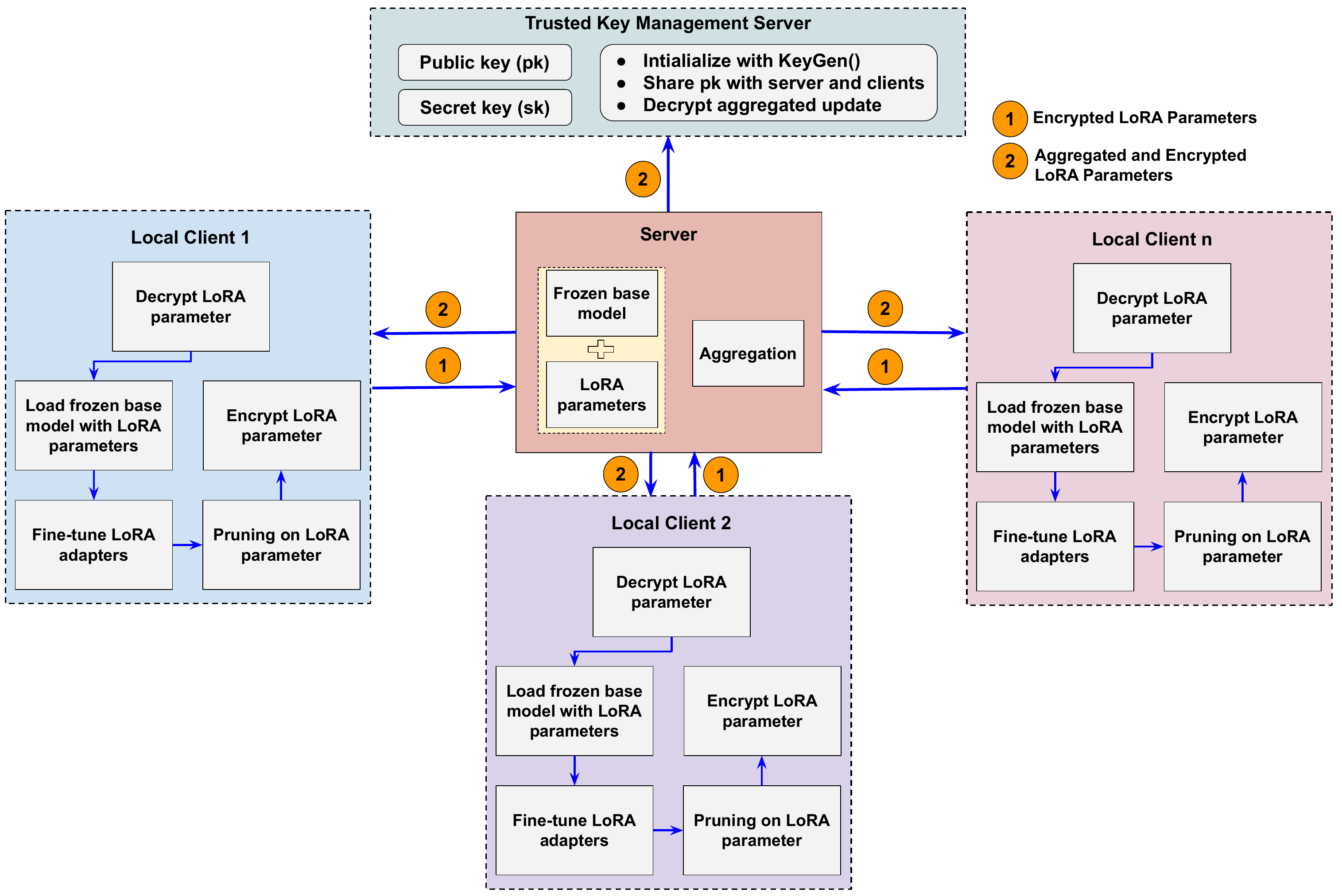}
    \caption{Overview of our proposed framework.}
    \label{fig:figure_1}
\end{figure*}

Fine-tuning large language models (LLMs) in federated environments introduces three key challenges: (i) preserving data privacy across distributed clients, (ii) ensuring computational efficiency on resource-constrained devices, and (iii) maintaining scalability for large models. To address these challenges, we propose FedShield-LLM, a unified framework that integrates FHE, LoRA, and unstructured pruning. Our mechanism ensures a secure and efficient fine-tuning process, as illustrated in Figure \ref{fig:figure_1} and detailed in Algorithm \ref{alg:fedshield_llm}. All notations are presented in Table \ref{tab:notations}.

At a high level, FHE enables secure aggregation of client updates without exposing raw parameters, LoRA reduces the number of trainable parameters to make fine-tuning efficient, and pruning limits the attack surface while improving communication efficiency. Our framework builds upon the federated LLM fine-tuning paradigm introduced in OpenFedLLM~\cite{r3_ye2024openfedllm}, and extends it with strong privacy and efficiency guarantees.

As illustrated in Fig.~\ref{fig:figure_1}, the server initializes a frozen base LLM and distributes only lightweight LoRA adapter parameters to selected clients. Each client performs local fine-tuning on private data, followed by sparsification and encryption of updates before transmission. The server aggregates encrypted updates without accessing any individual client information and broadcasts the aggregated result back to clients. This process iterates across multiple communication rounds, after which the final LoRA adapters are merged with the base model to obtain the task-specific LLM. Finally, the third-party server can decrypt and merge the LoRA parameters with the base model for inference.

\begin{algorithm}[tb]
   \caption{FedShield-LLM}
   \label{alg:fedshield_llm}
\begin{algorithmic}
   \STATE \textbf{Input:} $N$: Number of clients, $R$: Number of rounds, $D$: Local datasets, $w_0$: Initial global model, $C$: CKKS encryption context, $p_0$: Initial pruning rate, $p_{\text{target}}$: Maximum pruning rate, $t_{\text{eff}}$: Pruning start round, $t_{\text{target}}$: Round for maximum pruning
   \STATE \textbf{Output:} $w_{T}$: Final global model, $L$: Training loss history

   \STATE Initialize global model $w_0$
   \STATE Set up CKKS encryption context $C$

   \FOR{$t = 1$ \textbf{to} $R$}
       \STATE Select a subset of clients $n_t \subseteq N$

       \STATE Compute pruning rate: $p_t \gets \max\left(0, \frac{t - t_{\text{eff}}}{t_{\text{target}} - t_{\text{eff}}}\right) \cdot \left(p_{\text{target}} - p_0\right) + p_0$

       \FOR{each client $i \in n_t$}
           \STATE Synchronize global model $w_t$ to local model $w_{i}$
           \STATE Load client dataset $D_i$
           \STATE Fine-tune $w_{i}$ using LoRA for 1 epoch on $D_i$
           \STATE Extract LoRA parameters $\Delta w_{i} = A_i \cdot B_i$

           \STATE Calculate mask: $m_{i} \gets \text{mask}(\Delta w_{i}, p_t, \text{L1 norm})$
           \STATE Sparsed model updates: $\Delta w_{i}^{p} \gets \Delta w_{i} \odot m_{i}$
           \STATE Encrypt $\Delta w_{i}^{p}$ using CKKS encryption
       \ENDFOR

       \STATE Aggregate encrypted updates: $\bar{w}_t \gets \frac{1}{|n_t|} \sum_{i \in n_t} \Delta w_{i}^{p}$
       \STATE Decrypt $\bar{w}_t$ to obtain the aggregated update
       \STATE Update global model: $w_{t+1} \gets w_t + \bar{w}_t$

       \STATE Evaluate global model and compute training loss
       \STATE Save model checkpoint periodically
   \ENDFOR

\STATE \textbf{return} $w_T$: Final global model, $L$: Training loss history
\end{algorithmic}
\end{algorithm}

To further enhance instruction-following capability, we incorporate FedIT, which enables clients to fine-tune the model using instruction-response pairs from private datasets. This ensures that the learned model remains aligned with task-specific instructions while preserving data locality and privacy~\cite{r3_ye2024openfedllm}.

Next, we describe each component of the framework in detail, starting with the LoRA-based local update mechanism, followed by pruning and secure aggregation.

\paragraph{LoRA-based Local Update}

During each communication round, the server distributes the current global model $w_t$ to selected clients. Instead of updating all model parameters, each client fine-tunes only the LoRA adapters, significantly reducing computational overhead.

The LoRA update is defined as:
\begin{align}
    \Delta w_{i} = A_{i,t} \cdot B_{i,t},
\end{align}

where $A_{i,t} \in \mathbb{R}^{d \times r}$ and $B_{i,t} \in \mathbb{R}^{r \times d}$ are low-rank matrices for the client $i$ at time $t$.. This formulation ensures that the number of trainable parameters is reduced from $O(d^2)$ to $O(rd)$, making fine-tuning feasible for distributed clients.

\paragraph{Pruning}

After computing the LoRA updates, each client applies unstructured pruning to remove less significant parameters. This step serves two purposes: (i) reducing communication overhead and (ii) limiting the exposure of sensitive information by minimizing the update surface.

The pruning mask is computed as:
\begin{align}
m_i \leftarrow \text{mask}(\Delta w_i, p_t, \text{L1 norm}),
\end{align}

where $p_t$ is the pruning rate at round $t$ \cite{mia2024quancrypt}. The mask retains parameters with larger magnitudes and zeroes smaller ones.

The pruned update is:
\begin{align}
\Delta w_i^{p} \leftarrow \Delta w_i \odot m_i,
\end{align}

where $\odot$ denotes element-wise multiplication. This process ensures that smaller, less impactful weights are set to zero, focusing computational resources on the most influential parameters.

\paragraph{Secure Aggregation via FHE}

To guarantee privacy, each client encrypts its pruned update before transmission. We use the CKKS HE scheme, which allows arithmetic operations directly on encrypted data.

\begin{align}
\Delta w_i^{p} \leftarrow \text{Enc}(\Delta w_i^{p}, \mathcal{C}),
\end{align}

where $\mathcal{C}$ is the encryption context. The server aggregates encrypted updates without accessing plaintext values.

The aggregation step is:
\begin{align}
\bar{w}_t \leftarrow \frac{1}{|n_t|} \sum_{i \in n_t} \Delta w_i^{p},
\end{align}

After aggregation, the result is decrypted to obtain the global LoRA update, which is applied to the global model and redistributed for the next round.

\noindent Overall, the proposed methodology demonstrates how secure, scalable, and efficient FL can be achieved by integrating FHE with pruning and LoRA. Experimental results validate its ability to maintain competitive performance while preserving strict privacy standards, making it suitable for sensitive applications in domains like healthcare, finance, and autonomous systems. 

\subsection{Convergence Analysis}
We provide a convergence guarantee for the FedShield-LLM training algorithm. In essence, FedShield-LLM performs an encrypted and sparsified variant of the standard FedAvg procedure \cite{mcmahan2017communication}. The added encryption does not alter any numerical updates, and the pruning step can be viewed as a form of gradient sparsification, which, under certain conditions, does not manipulate the convergence rate of SGD \cite{li2020federated}. Below we formalize the convergence result.
\newtheorem{theorem}{Theorem}
\begin{theorem}[Convergence of FedShield-LLM]
\label{thm:convergence}
Let $F(w)=\sum_{i=1}^N p_i F_i(w)$ be the global objective, where $F_i(w)$ is the local loss on client $i$ and $p_i$ is a weighting factor (e.g., $p_i = \frac{n_i}{\sum_j n_j}$ by data size). Suppose each $F_i$ is $L$-smooth and bounded below, and that the stochastic gradients have bounded variance $\sigma^2$. Assume FedShield-LLM uses (1) LoRA-based local updates $\Delta w_i^r = A_i^r B_i^r$, (2) dynamic unstructured pruning that zeros out parameters with a bounded compression error $\|e_i^r\| \le \epsilon_r$, and (3) secure aggregation via CKKS encryption. If the learning rate $\eta$ is sufficiently small, then after $R$ rounds of communication, for some constant $C>0$ (depending on $L$ and $\sigma^2$). In other words, FedShield-LLM converges sublinearly to a stationary point of $F(w)$, with an $O(1/\sqrt{R})$ rate in the general non-convex setting.
\end{theorem}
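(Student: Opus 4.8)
The plan is to reduce the FedShield-LLM update to a perturbed instance of FedAvg and then invoke a standard nonconvex smooth-optimization descent argument. The key observation, as the authors note, is that CKKS encryption is numerically transparent: decrypting the homomorphically aggregated ciphertext yields (up to negligible CKKS rounding noise) exactly the plaintext average $\bar{w}_t = \frac{1}{|n_t|}\sum_{i\in n_t}\Delta w_i^p$. So encryption can be dropped from the analysis entirely, and the only genuine deviation from vanilla FedAvg is the pruning operation. I would therefore model the pruned local update as the true local update plus an additive error, writing $\Delta w_i^{p,r} = \Delta w_i^r - e_i^r$ where $e_i^r := \Delta w_i^r \odot (\mathbf{1}-m_i^r)$ is the discarded-mass vector, and invoke the hypothesis $\|e_i^r\|\le \epsilon_r$.

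First I would set up the smoothness inequality. Using $L$-smoothness of $F$ (which follows from each $F_i$ being $L$-smooth, since $F$ is a convex combination), the descent lemma gives
\begin{equation}
\mathbb{E}\!\left[F(w^{r+1})\right] \le F(w^r) + \langle \nabla F(w^r), \mathbb{E}[\bar{w}_r]\rangle + \frac{L}{2}\,\mathbb{E}\!\left[\|\bar{w}_r\|^2\right].
\end{equation}
Here $\bar{w}_r$ plays the role of the aggregated step $w^{r+1}-w^r$. The next step is to show that $\mathbb{E}[\bar{w}_r]$ is a good descent direction: in expectation it equals $-\eta\,\nabla F(w^r)$ plus two correction terms, one from stochastic-gradient noise (controlled by the bounded-variance assumption) and one from pruning (controlled by $\epsilon_r$). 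The inner-product term then splits into a dominant $-\eta\|\nabla F(w^r)\|^2$ and cross terms that I would bound with Cauchy-Schwarz and Young's inequality, e.g. $|\langle \nabla F(w^r), e^r\rangle| \le \frac{\eta}{2}\|\nabla F(w^r)\|^2 + \frac{1}{2\eta}\epsilon_r^2$, so the pruning error is absorbed at the cost of an additive $O(\epsilon_r^2/\eta)$ residual.

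The third step is to choose $\eta$ small enough that the positive $\frac{L}{2}\|\bar{w}_r\|^2$ curvature term does not overwhelm the negative $-\eta\|\nabla F(w^r)\|^2$ term; the standard requirement $\eta \le 1/L$ (or a constant multiple thereof) leaves a net coefficient of order $\eta$ on $\|\nabla F(w^r)\|^2$. I would then telescope the per-round inequality over $r=0,\dots,R-1$, using that $F$ is bounded below by $F^*$ so that $\sum_r (F(w^r)-F(w^{r+1}))\le F(w^0)-F^*$ is finite. Rearranging yields
\begin{equation}
\frac{1}{R}\sum_{r=0}^{R-1}\mathbb{E}\!\left[\|\nabla F(w^r)\|^2\right] \le \frac{F(w^0)-F^*}{\eta R} + \frac{L\eta\,\sigma^2}{2} + \frac{1}{\eta R}\sum_{r=0}^{R-1}\epsilon_r^2,
\end{equation}
and since the minimum over $r$ is at most the average, setting $\eta \propto 1/\sqrt{R}$ balances the first two terms to give the claimed $C/\sqrt{R}$ rate, provided $\frac{1}{\sqrt{R}}\sum_r \epsilon_r^2$ stays bounded (e.g. the pruning errors are summable or uniformly $O(1/\sqrt{R})$).

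The main obstacle I expect is handling the pruning error rigorously rather than just the encryption. Unlike random sparsification, magnitude (L1) pruning is a deterministic, state-dependent operator, so $e_i^r$ is correlated with $\nabla F(w^r)$ and the bound $\|e_i^r\|\le\epsilon_r$ alone does not make $\mathbb{E}[\Delta w_i^{p,r}]$ an unbiased gradient estimator. I would need to either (i) treat pruning purely as adversarial bounded perturbation and carry the $\epsilon_r^2$ terms through the telescoping sum, requiring the summability condition $\sum_r \epsilon_r^2 = O(\sqrt{R})$, or (ii) impose an error-feedback / contraction assumption on the pruning operator as in the sparsified-SGD literature. A secondary subtlety is that LoRA constrains updates to a low-rank subspace, so $\nabla F$ should properly be interpreted as the gradient restricted to the LoRA-parameterized manifold; I would state this restriction explicitly so that the stationary point in the conclusion is understood within the trainable LoRA subspace.
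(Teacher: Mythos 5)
Your proposal follows essentially the same route as the paper's proof sketch: discard encryption as numerically transparent, decompose the aggregated step as the true FedAvg update minus an additive pruning error $\bar{e}^r$ with $\|\bar{e}^r\|\le\epsilon_r$, apply the $L$-smoothness descent lemma, and telescope to obtain the $O(1/\sqrt{R})$ rate. If anything your version is more careful than the paper's: you make explicit the $\eta\propto 1/\sqrt{R}$ balancing and the summability condition $\sum_r \epsilon_r^2 = O(\sqrt{R})$ needed for the error terms not to dominate, and you correctly flag two points the paper's sketch leaves implicit --- that deterministic magnitude pruning yields a biased, state-dependent perturbation rather than an unbiased compressor, and that the stationarity guarantee should be read as restricted to the LoRA-parameterized subspace.
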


\begin{equation}
{
\min_{0 \le r < R} \mathbb{E}\!\big[\|\nabla F(w^r)\|^2\big] \;\le\; \frac{C}{\sqrt{R}}\,, 
}
\end{equation}

\begin{proof}[Proof Sketch]
Our proof builds on the classical convergence analysis of federated SGD \cite{mcmahan2017communication}  and results from compressed distributed optimization \cite{li2020federated}.

Step 1: Baseline FedAvg dynamics.
In the absence of pruning and encryption, FedShield-LLM reduces to vanilla FedAvg. In each round $r$, a subset of clients compute local LoRA updates $\Delta w_i^r = A_i^r B_i^r$ and send them to the server. The server aggregates these updates (via a weighted average $\bar{\Delta}^r = \sum_i p_i\,\Delta w_i^r$) and updates the global model: $w^{r+1} = w^r + \eta\,\bar{\Delta}^r$. Under the assumed smoothness and bounded variance conditions, this Federated SGD procedure is known to converge to a stationary point at a sublinear rate $O(1/\sqrt{R})$ for non-convex objectives \cite{yu2019parallel}. In particular, with $\eta$ small enough one has $\min_{0\le r<R}\mathbb{E}[\|\nabla F(w^r)\|^2] = O(1/\sqrt{R})$ \cite{reddi2020adaptive}, which matches the typical convergence rate of centralized SGD in the non-convex setting \cite{ghadimi2013stochastic}.

Step 2: Effect of encryption. FedShield-LLM uses CKKS HE for secure aggregation of updates. Since homomorphic addition yields exactly the same sum as plaintext (just encrypted), the aggregated update $\bar{\Delta}^r$ is numerically identical to that of vanilla FedAvg \cite{cheon2017homomorphic}. Therefore, encryption has no effect on the optimization dynamics or convergence, it merely hides individual $\Delta w_i^r$ from the server.

Step 3: Impact of pruning. After computing $\Delta w_i^r$, each client applies an unstructured pruning mask that zeros out the smallest-magnitude entries (according to the current prune rate $p_t$), producing a sparse update $\Delta w_{i,\text{prune}}^r = \Delta w_i^r \circ m_i^r$. Let $e_i^r := \Delta w_i^r - \Delta w_{i,\text{prune}}^r$ be the pruning error (the dropped components) \cite{alistarh2017qsgd}. The server then aggregates the pruned updates:

\begin{equation}
\begin{aligned}
\bar{\Delta}^r 
&= \sum_{i} p_i\,\Delta w_{i,\text{prune}}^r  \\
&= \sum_{i} p_i\,\Delta w_i^r \;-\; \sum_{i} p_i\,e_i^r \\
&= \bar{\Delta}_\text{full}^r \;-\; \bar{e}^r \,.
\end{aligned}
\label{eq:pruned-update}
\end{equation}

where $\bar{\Delta}_\text{full}^r$ is the full (unpruned) average update and $\bar{e}^r := \sum_i p_i e_i^r$ is the aggregated pruning error \cite{alistarh2018convergence}. Thus, pruning effectively adds a perturbation $-\bar{e}^r$ to the update at each round.

Step 4: Descent with smoothness. By $L$-smoothness of $F$, a standard descent lemma gives:

\begin{equation}
\begin{aligned}
F(w^{r+1}) 
&\;\le\; F(w^r) 
    + \eta\,\langle \nabla F(w^r),\, \bar{\Delta}^r \rangle  \\
&\quad + \frac{L\,\eta^2}{2}\,\|\bar{\Delta}^r\|^2 \,.
\end{aligned}
\label{eq:smoothness}
\end{equation}

Substituting $\bar{\Delta}^r = \bar{\Delta}_\text{full}^r - \bar{e}^r$ and taking expectation, we obtain an approximate descent inequality:

\begin{equation}
\begin{aligned}
\mathbb{E}\!\left[F(w^{r+1}) - F(w^r)\right]
&\;\le\; -\eta\,\mathbb{E}\!\left[\|\nabla F(w^r)\|^2\right]  \\
&\quad +\; \eta\,\mathbb{E}\!\left[\left\langle \nabla F(w^r),\, -\bar{e}^r \right\rangle\right] \\
&\quad +\; \frac{L\eta^2}{2}\,\mathbb{E}\!\left[\|\bar{\Delta}^r\|^2\right]\,.
\end{aligned}
\label{eq:expected-bound}
\end{equation}

The first term represents descent due to the true gradient, while the second term is an ascent error due to pruning. By Cauchy-Schwarz, $\langle \nabla F(w^r), -\bar{e}^r \rangle \le \|\nabla F(w^r)\|\,\|\bar{e}^r\|$. Using bounded variance and assuming $\|\bar{e}^r\| \le \epsilon_r$ (by our pruning design), the error term will be of order $\eta\,\|\nabla F(w^r)\|\,\epsilon_r$. For a moderate pruning ratio, this bias remains small \cite{wangni2018gradient}, and can be further mitigated by decreasing $\eta$ as needed. Meanwhile, the third term $L\eta^2 \|\bar{\Delta}^r\|^2$ is $O(\eta^2)$ and can be made negligible for small $\eta$.

Step 5: Sublinear convergence. Summing the above inequality over $r=0$ to $R-1$ and telescoping the left-hand side yields a bound on $F(w^R) - F(w^0)$. Since $F$ is bounded below, the total descent is finite. Thus, rearranging terms and dividing by $R$, we obtain:

\begin{equation}
\begin{aligned}
\frac{1}{R}\sum_{r=0}^{R-1}\mathbb{E}\!\left[\|\nabla F(w^r)\|^2\right]
&\;\le\; \frac{2\,\big(F(w^0) - F(w^*)\big)}{\eta\,R}  \\
&\quad +\; \frac{2\,L\,\eta}{R}\sum_{r=0}^{R-1}\mathbb{E}\!\left[\|\bar{\Delta}^r\|^2\right]  \\
&\quad +\; \frac{2\,\eta}{R}\sum_{r=0}^{R-1}\|\nabla F(w^r)\|\,\epsilon_r\,.
\end{aligned}
\label{eq:convergence-bound}
\end{equation}

where $F(w^*)$ is the lower bound of $F$. Under the bounded variance assumption, $\mathbb{E}[\|\bar{\Delta}^r\|^2]$ remains bounded by a constant related to $\sigma^2$. Moreover, if the pruning schedule is gentle (increasing sparsity slowly) such that $\epsilon_r$ does not grow too large, the last term can be controlled \cite{lin2017deep}. Therefore, for a sufficiently small $\eta$, the dominant term is $O(1/(\eta R))$, which leads to the bound $\min_{0\le r < R}\mathbb{E}[\|\nabla F(w^r)\|^2] \le O(1/\sqrt{R})$ as claimed. This indicates that FedShield-LLM reaches an $\epsilon$-stationary point (where $\mathbb{E}\|\nabla F(w)\|^2 \le \epsilon$) in $O(1/\epsilon^2)$ communication rounds, which is the same order of complexity as standard FedAvg without pruning or encryption \cite{khaled2020tighter}.
\end{proof}

\theoremstyle{remark}
\newtheorem{remark}{Remark}
\begin{remark}
The above convergence guarantee is provided for general non-convex objectives, which is appropriate since fine-tuning deep networks (LLMs) is a non-convex optimization problem. In this setting, a sublinear $O(1/\sqrt{R})$ rate is typical \cite{bottou2018optimization}. If additional structure is present—e.g. if $F(w)$ satisfies the Polyak-Łojasiewicz or $\mu$-strong convexity condition—then faster convergence can be achieved (e.g. a linear rate) \cite{karimi2016linear}. In practice, our experiments confirm that FedShield-LLM converges stably and rapidly, closely matching the accuracy trajectory of Vanilla FedAvg in early rounds. The encryption module ensures updates are aggregated securely without affecting convergence, and the gradual pruning schedule preserves model quality while still yielding convergence, demonstrating the algorithm’s stability under both encryption and pruning.
\end{remark}

\subsection{Security and Robustness Analysis of FedShield-LLM}

\begin{theorem}[Security and Robustness of FedShield-LLM]
\label{thm:fedshield-security}
Assuming the semantic security of CKKS under the RLWE assumption and given that each client's LoRA update is sparsified via unstructured pruning with rate $p_t$, the FedShield-LLM protocol protects against inference attacks (e.g., gradient inversion) from both passive adversaries and colluding servers, with negligible advantage for any polynomial-time attacker.
\end{theorem}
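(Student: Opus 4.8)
The plan is to establish the claim through a cryptographic reduction to the semantic security of CKKS, supplemented by an information-theoretic argument for the revealed aggregate and a complementary dimensionality bound contributed by pruning. First I would fix the adversarial model precisely, following the threat model of Section~\ref{sec:threat-model}: a passive (honest-but-curious) adversary, or a set of colluding servers, that observes the entire protocol transcript but does \emph{not} hold the CKKS secret key, which I assume resides with an authorized decryptor (the clients or a trusted key-holder). The adversary's view in round $r$ consists only of the ciphertexts $\text{Enc}(\Delta w_i^{p}, \mathcal{C})$ for each participating client together with the homomorphically computed ciphertext of the aggregate. Its goal is to beat the trivial reconstruction bound $1/|X|$ of Section~\ref{sec:threat-model}, i.e., to achieve $\Pr[\mathcal{A}(\cdot)=x]\gg 1/|X|$ for some private $x\in D_i$.

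The core step is a standard hybrid indistinguishability argument. I would define Hybrid~0 as the real execution and Hybrid~1 as an identical execution in which every client ciphertext is replaced by an encryption of the zero vector (or any fixed message) under a freshly sampled key. By the semantic security of CKKS under RLWE, these two hybrids are computationally indistinguishable to any probabilistic polynomial-time adversary; otherwise a direct reduction embeds the CKKS left-or-right challenge ciphertext into the transcript handed to $\mathcal{A}$ and uses $\mathcal{A}$'s reconstruction guess as a distinguisher, contradicting semantic security. Since in Hybrid~1 the adversary's view is statistically independent of the private data, its reconstruction probability there is at most $1/|X|$; transferring this back through the indistinguishability bound shows the real-world advantage exceeds $1/|X|$ by at most a negligible function of the security parameter. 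Extending to colluding servers is then immediate, because collusion without the secret key only merges several such transcripts, none of which contains any plaintext.

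The pruning step contributes a complementary robustness bound rather than the primary cryptographic guarantee. I would argue that the L1 unstructured pruning of Section~\ref{sec:Methodology}, which zeroes the lowest-magnitude entries at rate $p_t$, strictly reduces the number of nonzero parameters ever exposed to any inversion procedure, shrinking the effective dimension of the system an attacker must solve and thereby weakening gradient-inversion feasibility; this is the precise sense in which pruning ``reduces the attack surface.'' I would present this as a defense-in-depth layer that degrades an attacker's success even in weaker or partially-leaky models, while stressing that the negligible-advantage conclusion rests on the FHE layer.

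The main obstacle I anticipate is the decryption of the aggregate $\bar w_t$: unlike the individual updates, the summed update is eventually revealed in plaintext, so the reduction above only protects the \emph{per-client} contributions, not the mixture. To close this gap I would (i) insist on the secure-aggregation key model in which the server performs only homomorphic addition and never decrypts, so that even the aggregate remains encrypted from its view, and (ii) for any party that does obtain $\bar w_t$, invoke the classical secure-aggregation fact that when $|n_t|\ge 2$ the sum $\sum_{i\in n_t}\Delta w_i^{p}$ admits many preimages and does not uniquely determine any single $\Delta w_i^{p}$, with the pruning-induced sparsity further enlarging this ambiguity. Formalizing this last point—quantifying exactly how much an attacker learns from the plaintext aggregate alone—is the delicate part, and I would scope the theorem to individual-client data inference so that the claim is provable under the stated RLWE and sparsification assumptions.
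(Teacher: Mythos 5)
Your proposal follows essentially the same two-layer argument as the paper: (1) invoke the IND-CPA/semantic security of CKKS under RLWE to argue that individual encrypted updates leak nothing to a server holding only the public key, and (2) argue that the only plaintext ever revealed---the decrypted aggregate of sparsified updates---forms an underdetermined system with many preimages, so that even a server colluding with $N-1$ clients cannot reconstruct the remaining client's sparse, low-rank LoRA update. Your version is in fact somewhat more careful than the paper's, since you make the reduction explicit via a hybrid argument and candidly flag the delicacy of quantifying leakage from the plaintext aggregate, which the paper handles only by asserting ill-posedness.
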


\begin{proof}
\textbf{(1) Semantic Security via CKKS:}  
Each client encrypts its sparsified LoRA update $\Delta w_i^p = \Delta A_i \cdot \Delta B_i$ using the CKKS encryption scheme before transmission. CKKS operates over the polynomial ring $R_q = \mathbb{Z}_q[x]/(x^N + 1)$, with a secret key $s \leftarrow \chi$ and a public key $\mathsf{pk} = (b, a)$ where $b = -a \cdot s + e \pmod{q}$. The message is first scaled and encoded, and then encrypted as:
\[
c_0 = b \cdot u + e_1 + m', \quad c_1 = a \cdot u + e_2
\]
The server aggregates encrypted client updates homomorphically and holds only the public key. A trusted party with the secret key performs decryption of the final aggregated ciphertext. Under the RLWE assumption, CKKS ensures IND-CPA security:

\begin{equation}
\begin{aligned}
\text{Enc}(\Delta) \approx_c \text{Enc}(\Delta') 
\quad &\Rightarrow \\
\Pr[\mathcal{A} \text{ distinguishes}] 
&\leq \text{negl}(\lambda)
\end{aligned}
\end{equation}

Thus, ciphertexts reveal no meaningful information to adversaries, including honest-but-curious servers.

\textbf{(2) Robustness from Sparsified Aggregation:}  
Each client applies a binary pruning mask $m_i \in \{0,1\}^d$ to their LoRA update, yielding:
\begin{align}
\Delta w_i^{p} = m_i \odot \Delta w_i, \quad \text{with } \|m_i\|_0 = (1 - p_t)d
\end{align}
After homomorphic aggregation and decryption, the server observes only the combined sparse update:
\begin{align}
{\Delta w}_{\text{agg}} = \sum_{i \in \mathcal{C}_t} m_i \odot \Delta w_i^{p}
\end{align}
This results in an underdetermined system, where:
\begin{align*}
\exists\, \{\Delta w_1', ..., \Delta w_N'\} &\neq \{\Delta w_1, ..., \Delta w_N\}:\\
&\sum_i m_i \odot \Delta w_i' = {\Delta w}_{\text{agg}}
\end{align*}
Due to:
\begin{itemize}
    \item \textbf{Sparsity}, and
    \item \textbf{Large client set size} ($N \gg 1$),
\end{itemize}
the inversion becomes ill-posed. Even in the worst-case collusion scenario (server and $N-1$ clients), the remaining client's update is both sparse and low-rank due to LoRA, significantly limiting reconstructability.

Thus, FedShield-LLM integrates HE with secure key separation and sparsified LoRA updates to achieve a two-layered defense against gradient inversion and data inference attacks, ensuring robustness under standard cryptographic assumptions.
\end{proof}

\subsection{Computational Complexity Analysis}
We analyze the computational and communication complexity of each component in FedShield-LLM: LoRA fine-tuning, HE with secure aggregation, unstructured pruning, and communication overhead.

\subsubsection{LoRA Fine-Tuning} Let $P$ be the total model parameters and $P_{\text{LoRA}} \ll P$ the LoRA trainable parameters. Each client trains only $P_{\text{LoRA}} = O(r \cdot d)$ parameters (rank $r$, hidden size $d$). The per-iteration time complexity is $O(P + P_{\text{LoRA}})$, dominated by forward/backward over $P$. Optimizer state and memory usage are only for $P_{\text{LoRA}}$, yielding substantial speed-up and efficiency. Compared to full fine-tuning, LoRA reduces memory and compute significantly, enabling client-side feasibility for cross-silo FL.

\subsubsection{FHE and Aggregation} 
Each client encrypts its LoRA update using CKKS. Encryption complexity per client is $O(n_c \cdot N_{\text{poly}} \log N_{\text{poly}})$, where $n_c$ is the number of ciphertexts. With vector packing, updates (e.g., 30M parameters) are reduced to a few thousand ciphertexts. Encryption takes $\sim$15 seconds per client; decryption of aggregated updates is $<1$ second. Homomorphic addition on the server is $O(N_{\text{poly}})$ per ciphertext. Compared to MPC-based schemes~\cite{bonawitz2017practical,singh2022federated}, FHE trades higher compute for simpler one-round communication and supports post-aggregation operations.

\subsubsection{Unstructured Pruning} 
Pruning selects a fraction $p_t \in [0.2, 0.7]$ of smallest magnitude parameters. Threshold selection via partial sort is $O(P_{\text{LoRA}})$; zeroing is linear. This introduces negligible runtime and can reduce encryption cost. Since pruning is done \emph{after} gradient computation, it does not affect training dynamics.

\subsubsection{Communication Overhead} 
LoRA reduces upload size from $P$ to $P_{\text{LoRA}}$ parameters (e.g., from 1200MB to 120MB). Pruning further reduces computational overhead. Encrypted updates are acceptable in cross-silo FL with high-bandwidth links. While CKKS ciphertexts inflate size, fewer communication rounds and sparse updates reduce overall cost. In contrast, MPC-based secure aggregation incurs lower computational overhead but adds protocol complexity.

\section{Experiments and Result Analysis} \label{sec:experiments}

\subsection{Experimental Setup}
Our experiments were conducted on an Ubuntu server equipped with two NVIDIA RTX A6000 GPUs, an Intel Core i9 processor, and 128 GB of RAM. The study explored FL under Independent and Identically Distributed (IID) data scenarios, simulating 3 clients per communication round. Each client performed local training using the Adam optimizer with a learning rate of $5 \times 10^{-5}$, a batch size of 16, and gradient accumulation steps set to 1. The sequence length was fixed at 512, and each round consisted of a single local epoch per client. The pruning rate was progressively increased from $20\%$, achieving the target value of $50\%$ by round 200. Pruning masks were applied to generate a sparsified model by setting less important weights to zero based on the pruning rate. This approach reduced computational overhead while maintaining accuracy. To ensure robust data privacy and security, we implemented HE from TenSEAL \cite{benaissa2021tenseal}, a library built on Microsoft SEAL. The CKKS encryption scheme was configured with a polynomial modulus degree of 16384 and coefficient modulus sizes $[60, 40, 40, 40, 60]$. This setup enabled secure aggregation of model updates, sharing only the public key among clients and the server, thereby maintaining the confidentiality of individual client data. Model evaluation was conducted by decrypting the aggregated model weights at the server using the private key when necessary to assess performance. For fine-tuning efficiency, we leveraged PEFT techniques, specifically LoRA, with rank $r=32$ and alpha $\alpha=64$. This configuration optimized communication and computation, ensuring scalability in large-scale federated settings.

\subsection{Dataset and Model}

Our experiments were conducted using four diverse datasets obtained from Hugging Face: \textit{vicgalle/alpaca-gpt4} (52,002 instruction-response pairs for general-purpose fine-tuning) \cite{taori2023stanford}, \textit{FinGPT/fingpt-sentiment-train} (76,772 labeled examples for financial sentiment analysis) \cite{liu2023fingpt}, \textit{TIGER-Lab/MathInstruct} (262,039 tasks for mathematical reasoning) \cite{yue2023mammoth}, and \textit{medalpaca/medical\_meadow\_medical\_flashcards} (33,955 medical flashcard entries) \cite{han2023medalpaca}. These datasets enabled evaluation across a wide range of domains, including general, financial, instructional, and medical tasks. To ensure fairness, data was distributed across clients using an IID strategy. This involved shuffling each dataset and partitioning it uniformly so that each client received an equal portion of the data, ensuring all clients handled representative shards. 

For the model, we used \textit{meta-llama/Llama-2-7b-hf} and \textit{meta-llama/Llama-2-13b-hf}, two transformer-based pre-trained LLMs \cite{touvron2023llama} known for their strong capabilities in natural language understanding and generation tasks. This setup enabled a comprehensive evaluation of our FL framework, demonstrating its effectiveness in securely and efficiently fine-tuning large language models under uniform data distribution across clients.

\subsection{Result Analysis}

In this section, we present a comprehensive evaluation of the proposed FedShield-LLM framework, including training loss curves, text-generation outputs, and BERT-based similarity scores to assess generation quality. Our primary experiments use \textit{Llama-2-7b-hf} and \textit{Llama-2-13b-hf} fine-tuned on the \textit{fingpt-sentiment-train} and \textit{alpaca-gpt4} datasets, with comparisons against Vanilla-FL and DP-LoRA based on consistent loss and generation metrics. Moreover, we evaluate the impact of unstructured pruning on both model utility and security. Our results show that moderate pruning preserves accuracy while substantially reducing vulnerability to gradient inversion attacks. Based on these findings, we recommend a pruning rate that provides a strong balance between performance and privacy. Additional training-loss plots, qualitative generation samples, and extended comparisons with other baselines are included in the supplementary material.

\begin{figure*}[t]
  \centering

  \begin{minipage}[b]{0.32\linewidth}
    \centering
    \includegraphics[width=\linewidth]{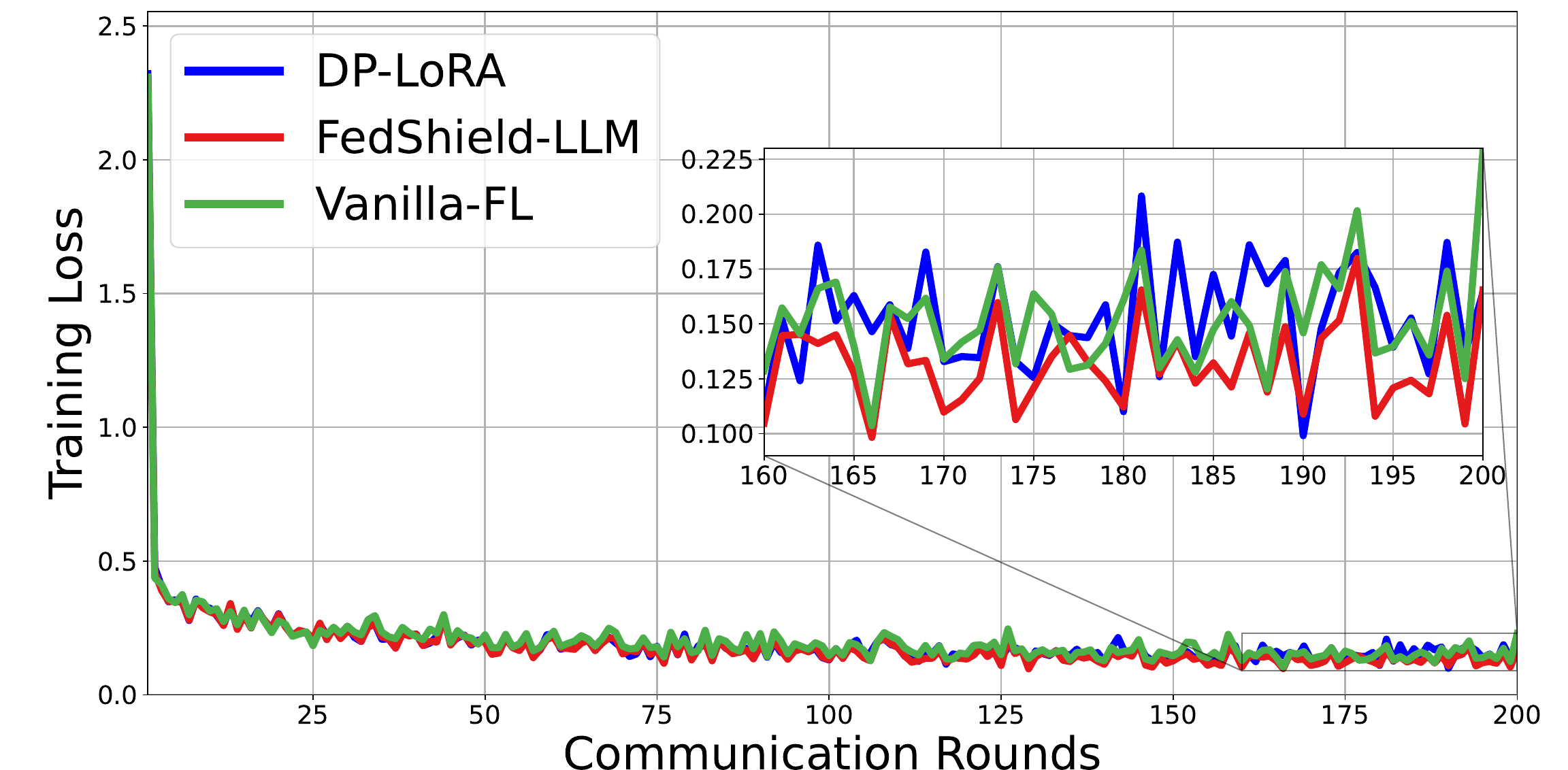}
    \subcaption{$(a)$}
    \label{fig:cl1_fin2}
  \end{minipage}
  \hfill
  \begin{minipage}[b]{0.32\linewidth}
    \centering
    \includegraphics[width=\linewidth]{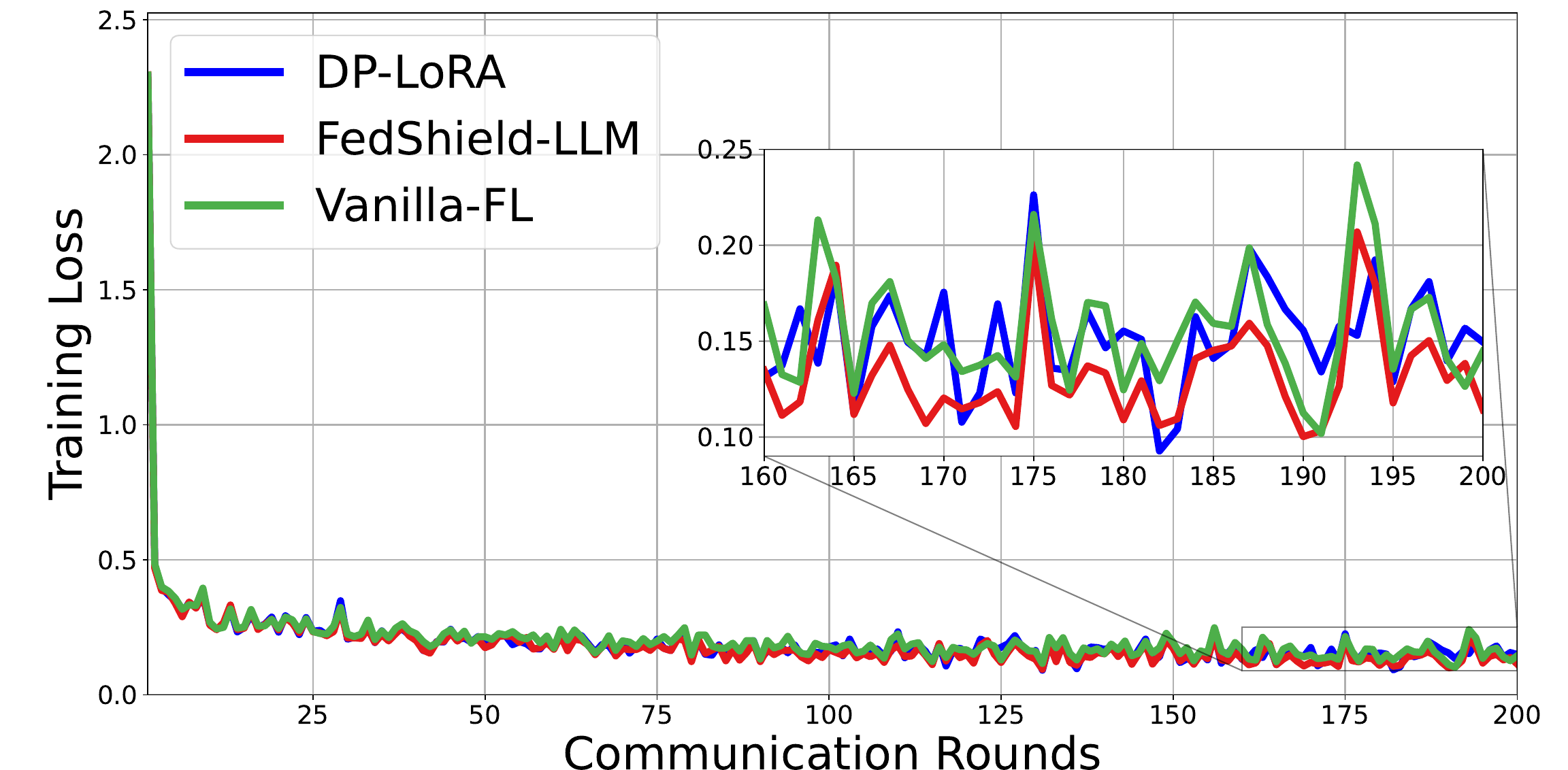}
    \subcaption{$(b)$}
    \label{fig:cl2_fin}
  \end{minipage}
  \hfill
  \begin{minipage}[b]{0.32\linewidth}
    \centering
    \includegraphics[width=\linewidth]{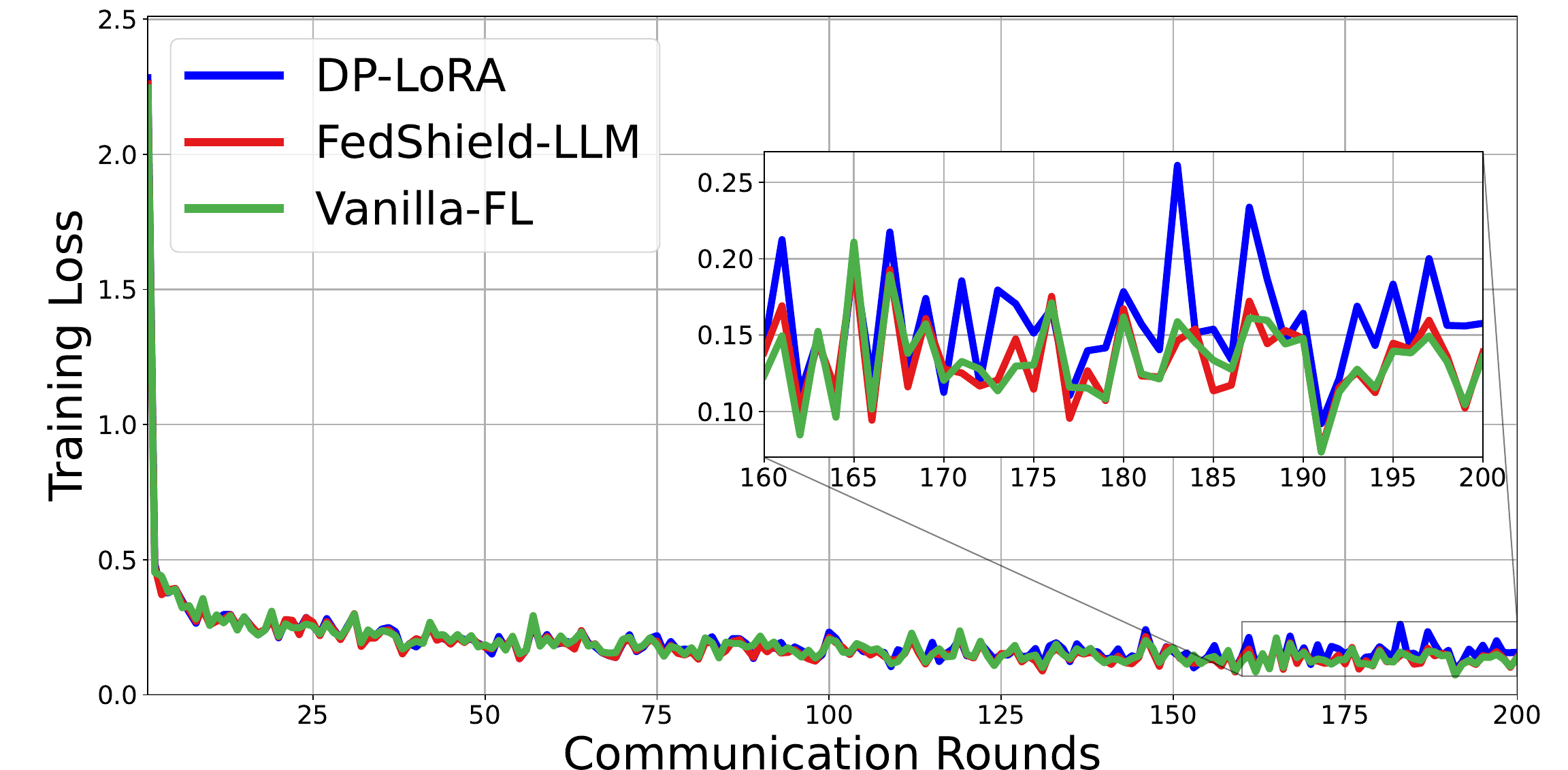}
    \subcaption{$(c)$}
    \label{fig:cl3_fin}
  \end{minipage}

  \vspace{12pt}

  \begin{minipage}[b]{0.32\linewidth}
    \centering
    \includegraphics[width=\linewidth]{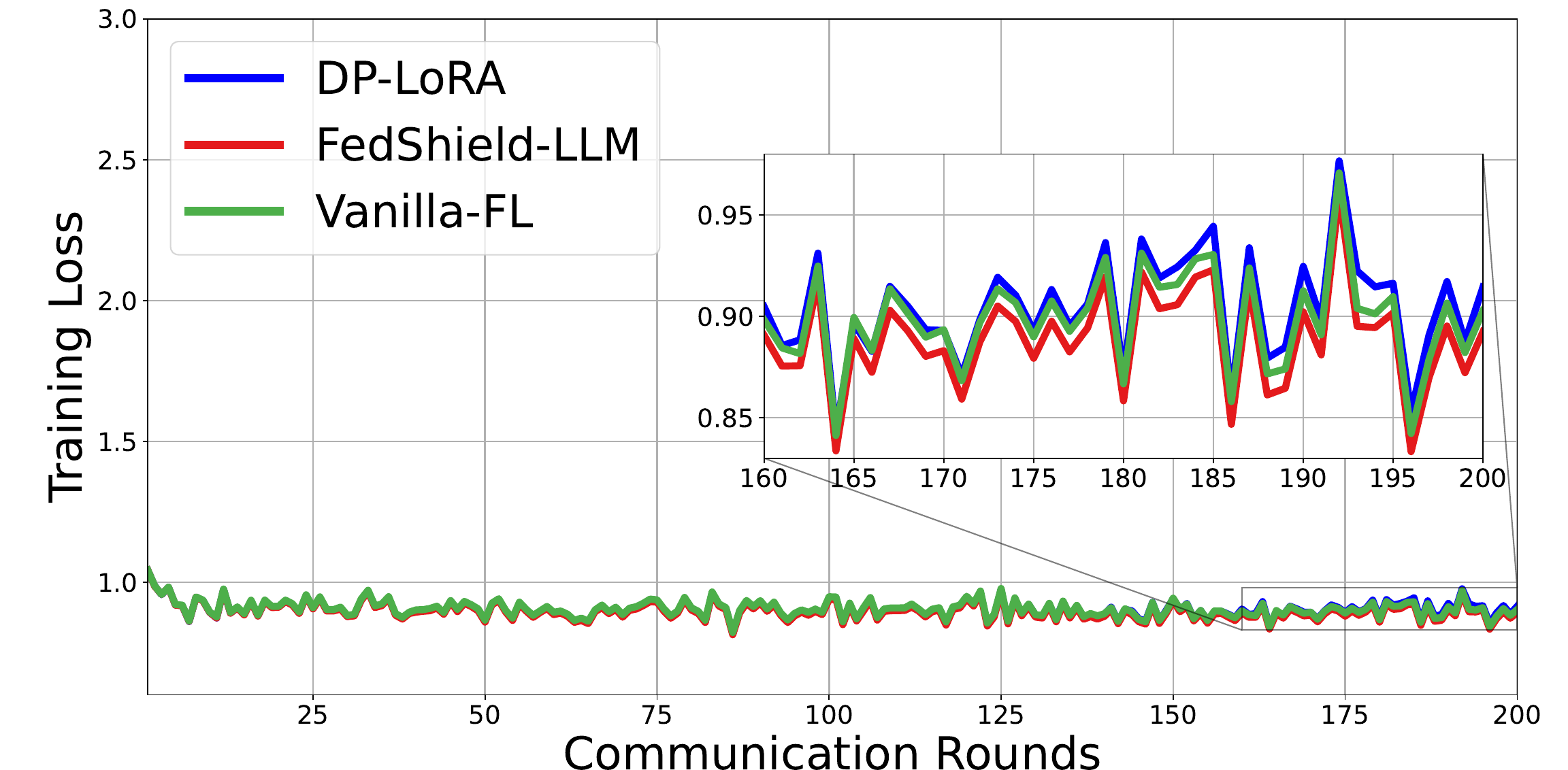}
    \subcaption{$(d)$}
    \label{fig:cl1_alpaca}
  \end{minipage}
  \hfill
  \begin{minipage}[b]{0.32\linewidth}
    \centering
    \includegraphics[width=\linewidth]{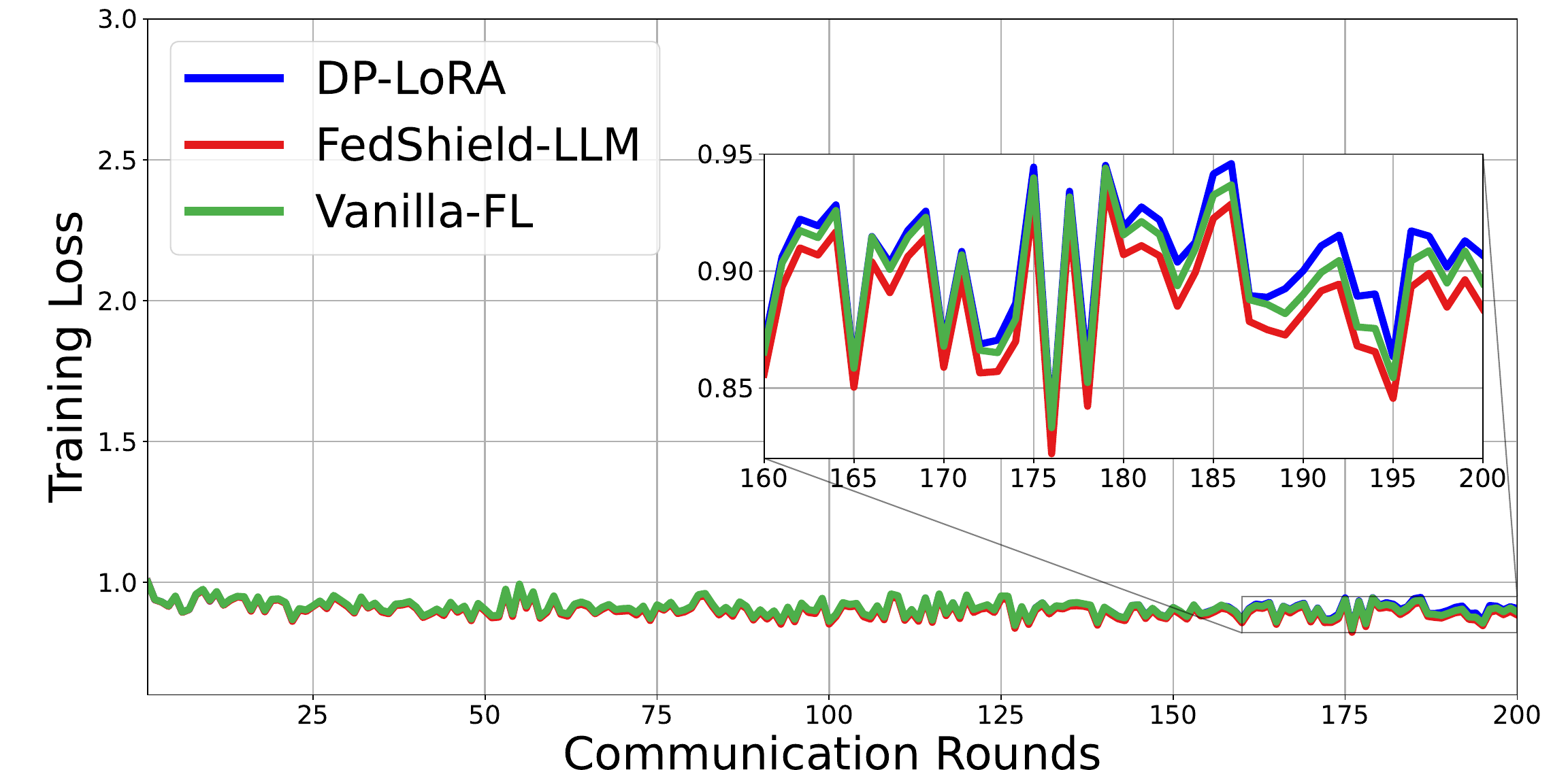}
    \subcaption{$(e)$}
    \label{fig:cl2_alpaca}
  \end{minipage}
  \hfill
  \begin{minipage}[b]{0.32\linewidth}
    \centering
    \includegraphics[width=\linewidth]{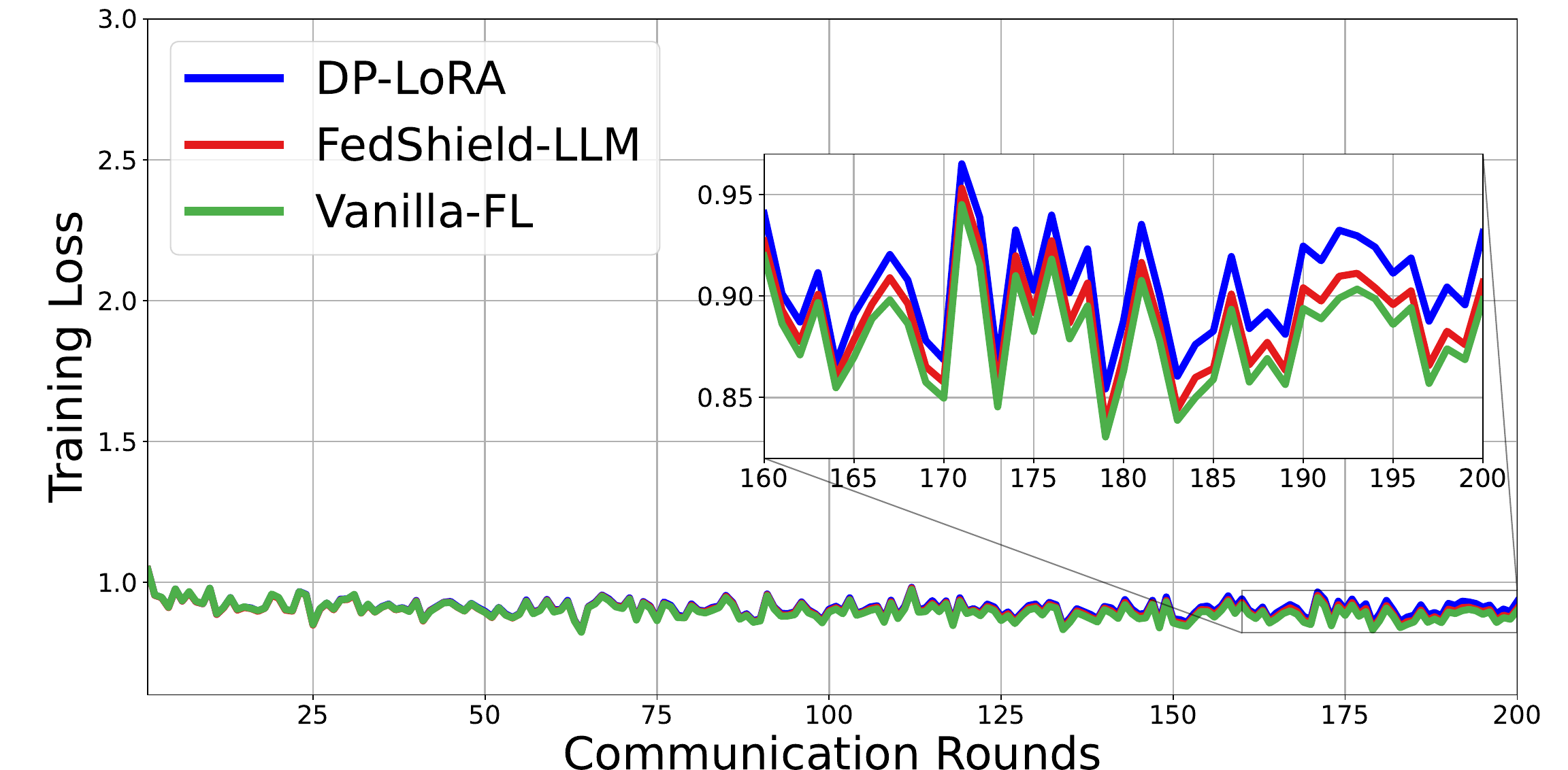}
    \subcaption{$(f)$}
    \label{fig:cl3_alpaca}
  \end{minipage}

  \caption{Comparison of training loss for three clients during federated LLM fine-tuning. Base Model: \texttt{meta-llama/Llama-2-7b-hf}. Top row (subfigures a--c): \textit{fingpt-sentiment-train}. Bottom row (subfigures d--f): \textit{vicgalle/alpaca-gpt4}.}
  \label{fig_3}
\end{figure*}

Figure~\ref{fig_3} presents the training loss patterns for three clients during federated fine-tuning on both the \textit{fingpt-sentiment-train} dataset (subfigures a–c) and the \textit{alpaca-gpt4} dataset (subfigures d–f). Across all clients and both datasets, FedShield-LLM consistently achieves lower and more stable training loss compared to Vanilla FL and DP-LoRA, demonstrating faster convergence, especially in the early communication rounds. DP-LoRA exhibits noticeably higher and more fluctuating loss throughout training, highlighting its reduced optimization efficiency in the federated setting. The uniform and stable loss trajectories observed across all clients further illustrate the robustness, scalability, and effectiveness of FedShield-LLM for privacy-preserving LLM fine-tuning under heterogeneous data distributions.

\begin{figure*}[t]
  \centering

  \begin{minipage}[b]{0.32\linewidth}
    \centering
    \includegraphics[width=\linewidth]{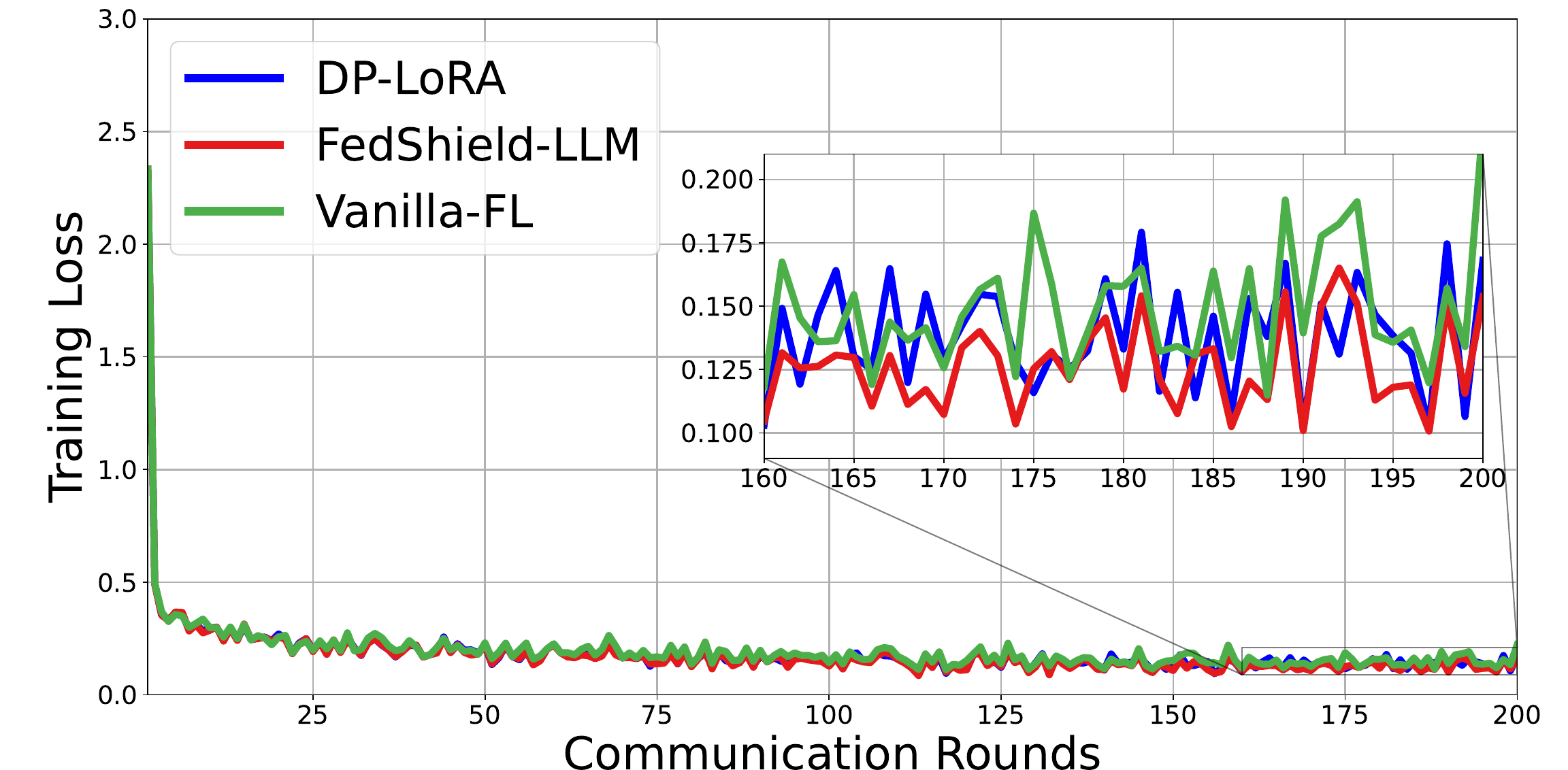}
    \subcaption{$(a)$}
    \label{fig:client1_fin}
  \end{minipage}
  \hfill
  \begin{minipage}[b]{0.32\linewidth}
    \centering
    \includegraphics[width=\linewidth]{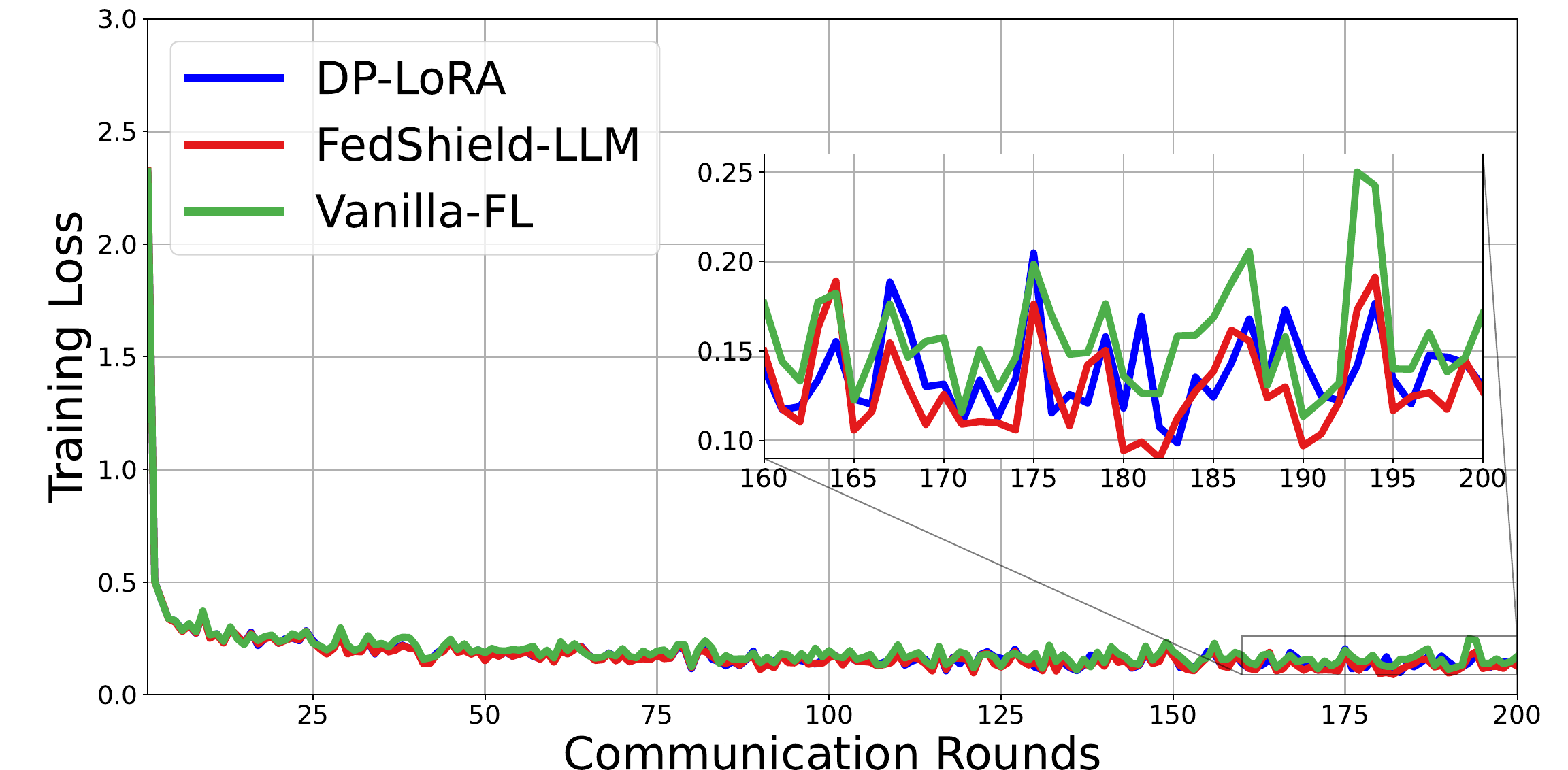}
    \subcaption{$(b)$}
    \label{fig:client2_fin}
  \end{minipage}
  \hfill
  \begin{minipage}[b]{0.32\linewidth}
    \centering
    \includegraphics[width=\linewidth]{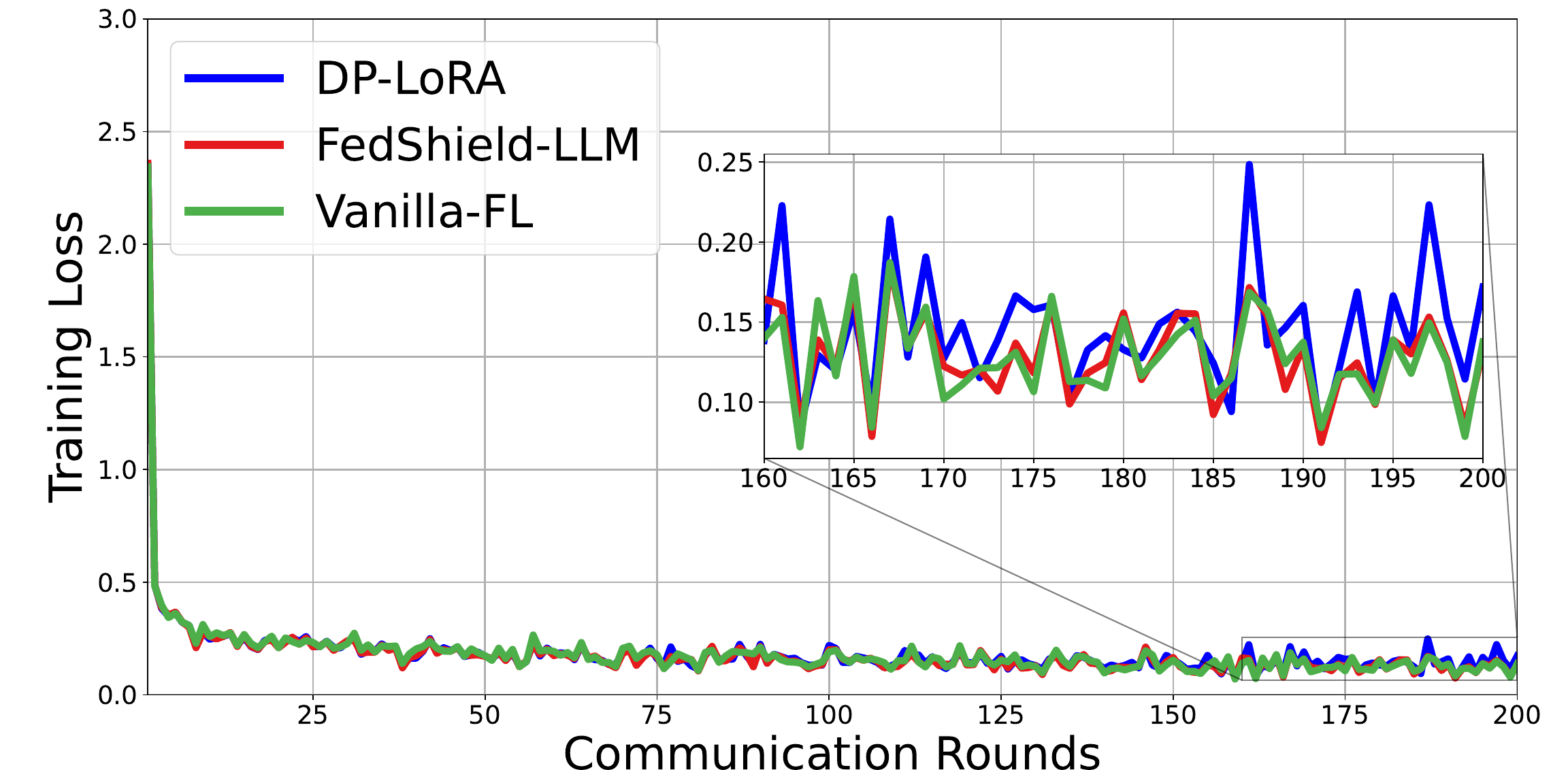}
    \subcaption{$(c)$}
    \label{fig:client3_fin}
  \end{minipage}

  \vspace{12pt}

  \begin{minipage}[b]{0.32\linewidth}
    \centering
    \includegraphics[width=\linewidth]{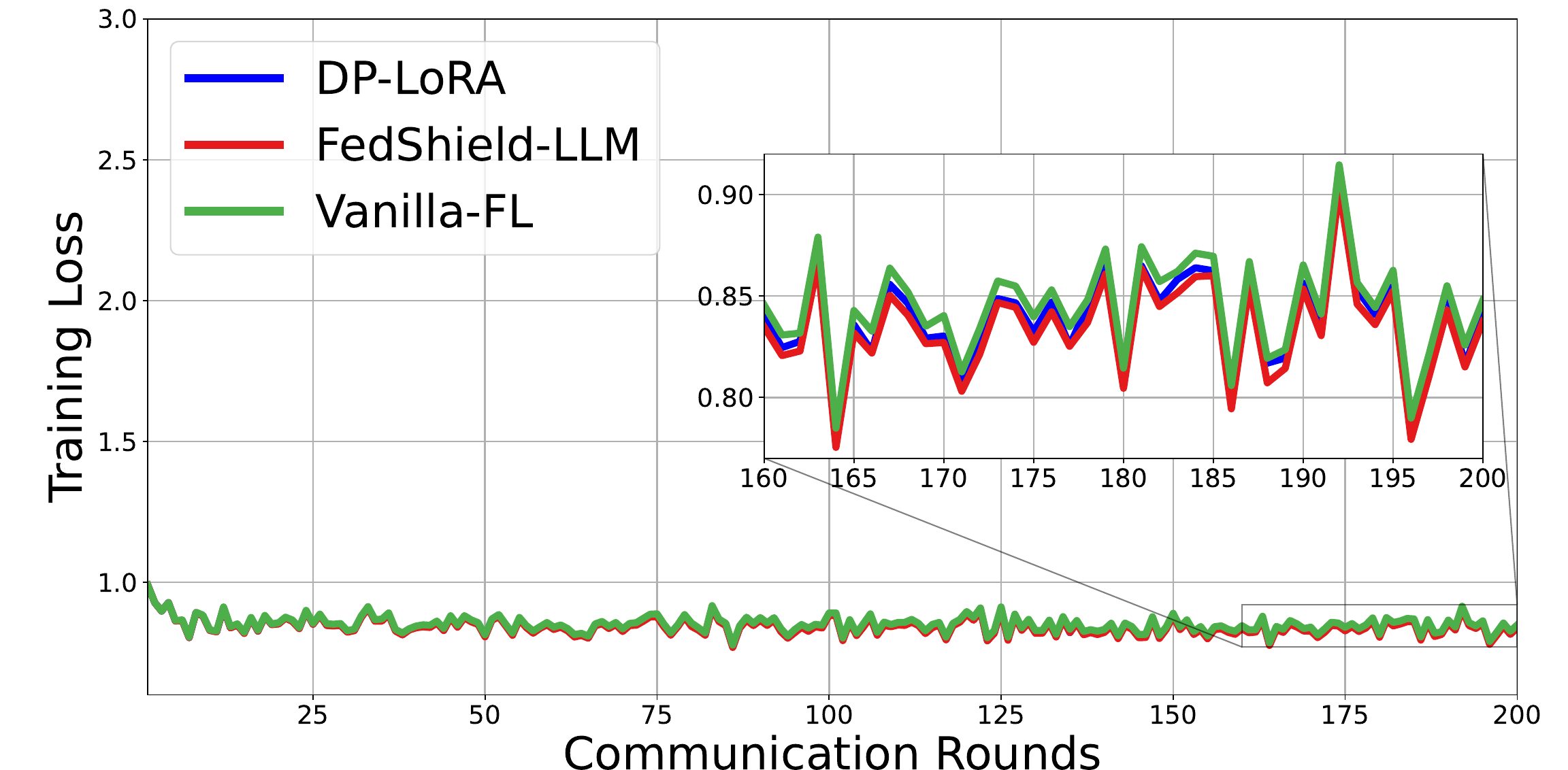}
    \subcaption{$(d)$}
    \label{fig:client1_alpaca}
  \end{minipage}
  \hfill
  \begin{minipage}[b]{0.32\linewidth}
    \centering
    \includegraphics[width=\linewidth]{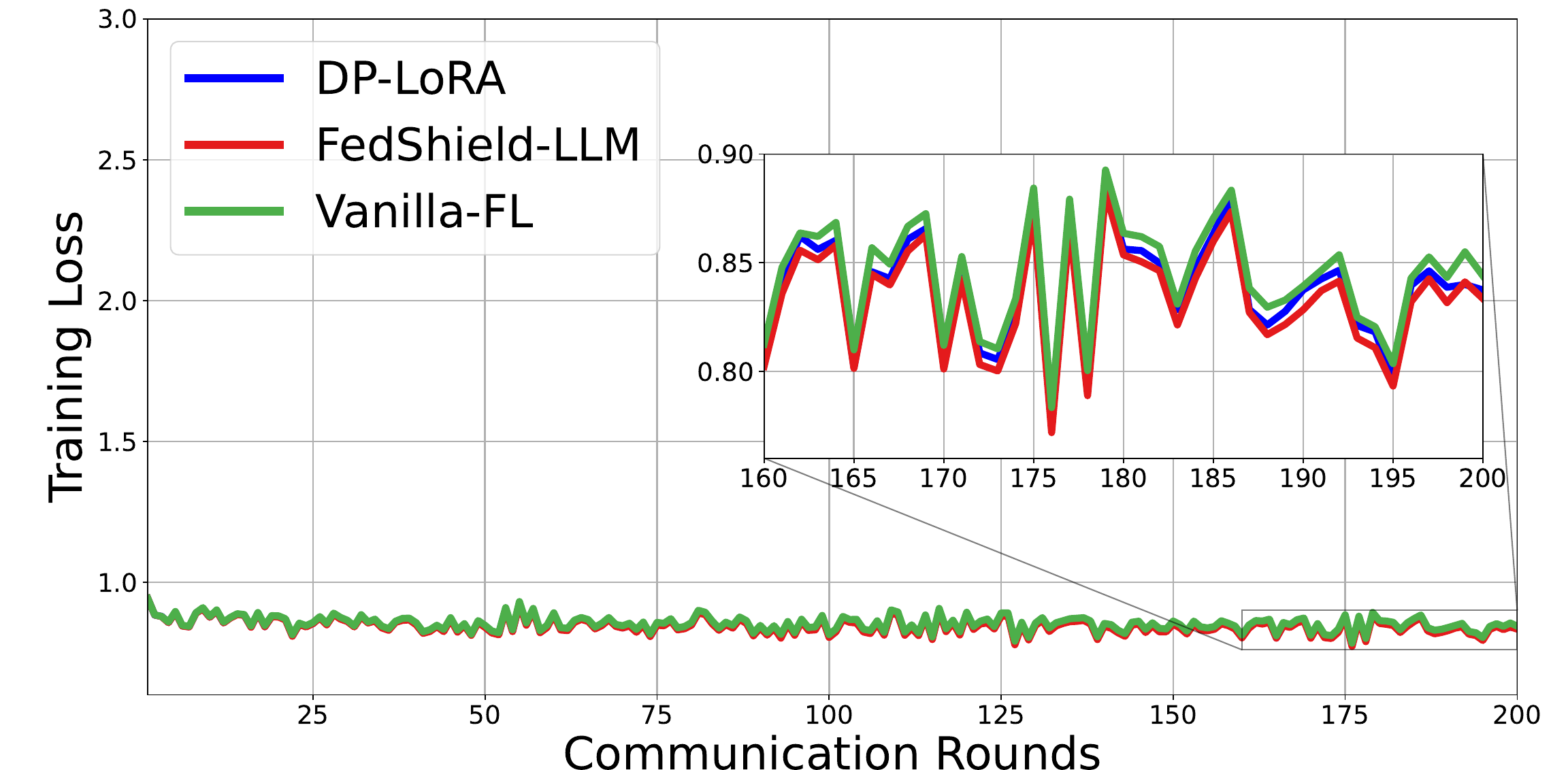}
    \subcaption{$(e)$}
    \label{fig:client2_alpaca}
  \end{minipage}
  \hfill
  \begin{minipage}[b]{0.32\linewidth}
    \centering
    \includegraphics[width=\linewidth]{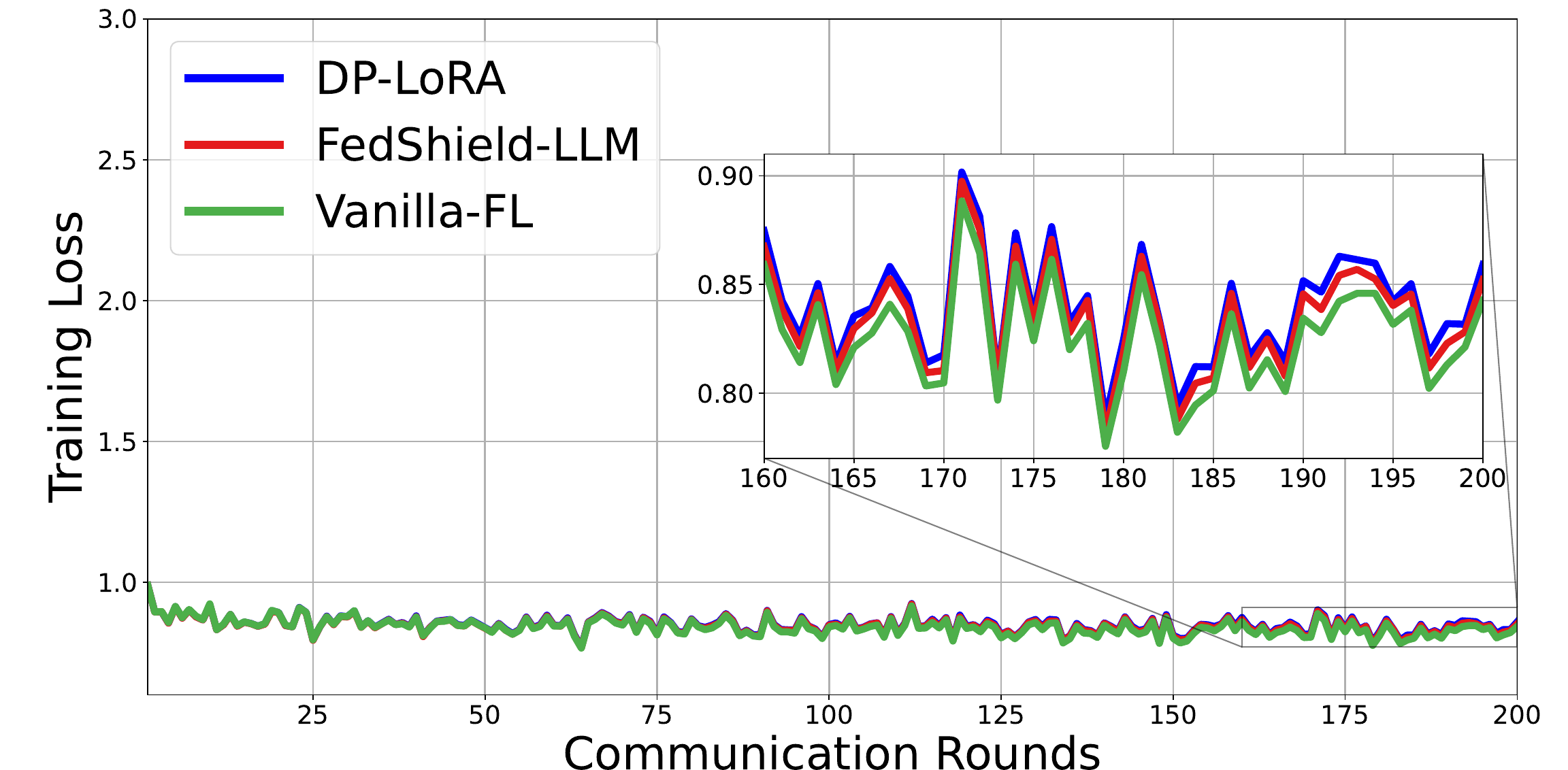}
    \subcaption{$(f)$}
    \label{fig:client3_alpaca}
  \end{minipage}

  \caption{Comparison of training loss for three clients during federated LLM fine-tuning. Base Model: \texttt{meta-llama/Llama-2-13b-hf}. Top row (subfigures a--c): \textit{fingpt-sentiment-train}. Bottom row (subfigures d--f): \textit{vicgalle/alpaca-gpt4}.}
  \label{fig_4}
\end{figure*}

Figure~\ref{fig_4} presents a comparative analysis of training loss across three clients during the federated fine-tuning of the LLaMA-2-13B model on two benchmark datasets: fingpt-sentiment-train (subfigures a–c) and vicgalle/alpaca-gpt4 (subfigures d–f). The proposed FedShield-LLM consistently achieves lower training loss compared to Vanilla FL and DP-LoRA across all communication rounds and clients. For both datasets, FedShield-LLM demonstrates improved convergence behavior and reduced variance, indicating enhanced training stability and robustness under heterogeneous client data distributions. Notably, DP-LoRA exhibits higher and more fluctuating loss values, reflecting its vulnerability to optimization inefficiencies in federated setups. The consistent performance of FedShield-LLM across diverse datasets and clients highlights its effectiveness in mitigating privacy-preserving fine-tuning challenges and its scalability to large model architectures in FL environments.

The results of the reasoning task are summarized below, comparing the performance of our proposed model with Vanilla FL, DP-LoRA, and GPT-4o on generated text based on sample questions. For this evaluation, we utilized a fine-tuned model trained on the Alpaca-GPT4 dataset, built upon the \textit{meta-llama/Llama-2-13b-hf} architecture. The findings indicate that our proposed model achieves performance nearly comparable to GPT-4o’s reference outputs on the evaluated questions, while outperforming both Vanilla FL and DP-LoRA. Notably, despite incorporating FHE and pruning for enhanced security and communication efficiency, the proposed method maintains high effectiveness. These results highlight the robustness, privacy-preserving capability, and practical utility of our approach in secure federated fine-tuning of LLMs.

\subsection{Impact of pruning}

\begin{figure*}[t]
  \centering

  \begin{minipage}[b]{0.32\linewidth}
    \centering
    \includegraphics[width=\linewidth]{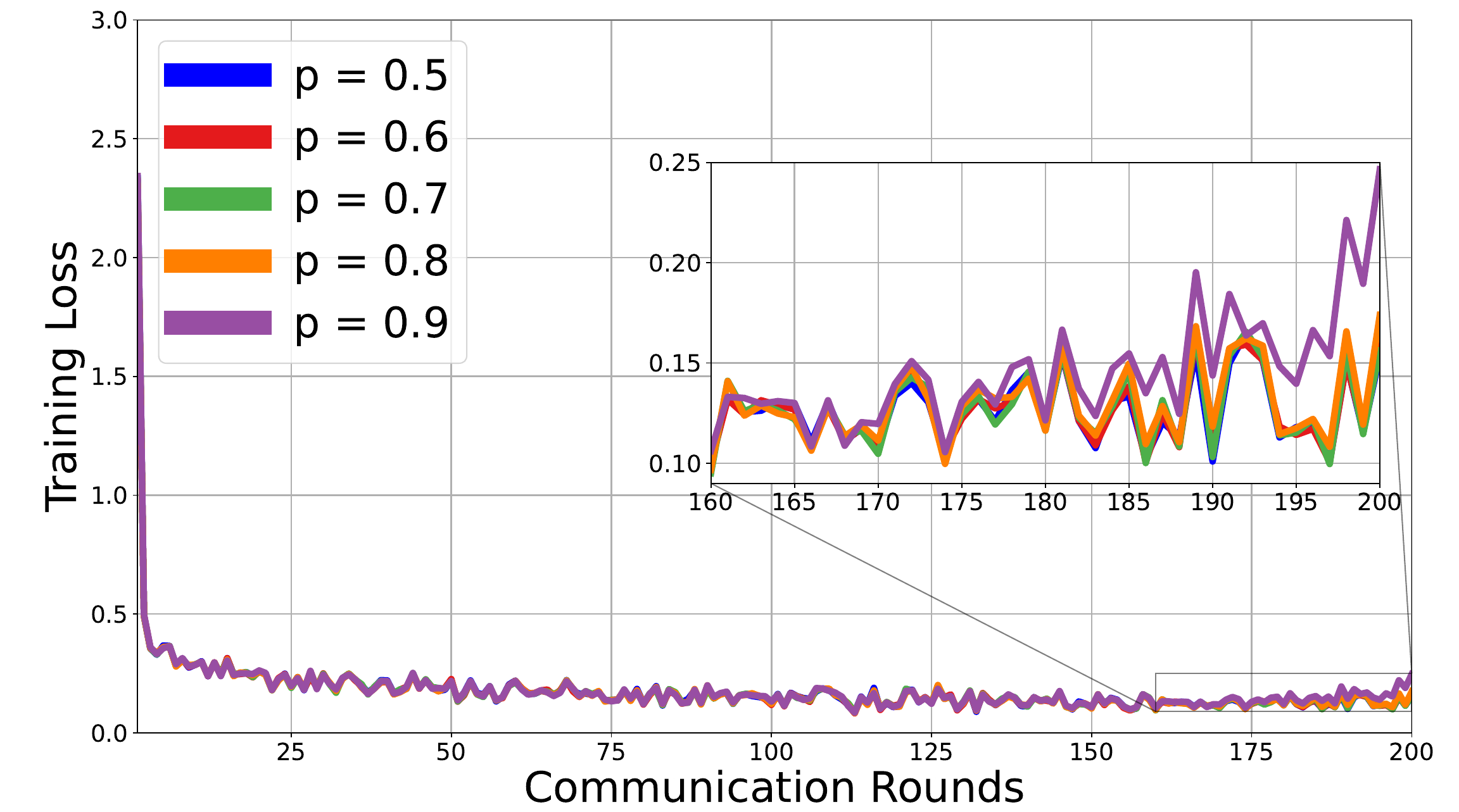}
    \subcaption{$(a)$}
    \label{fig:cl1_fin}
  \end{minipage}
  \hfill
  \begin{minipage}[b]{0.32\linewidth}
    \centering
    \includegraphics[width=\linewidth]{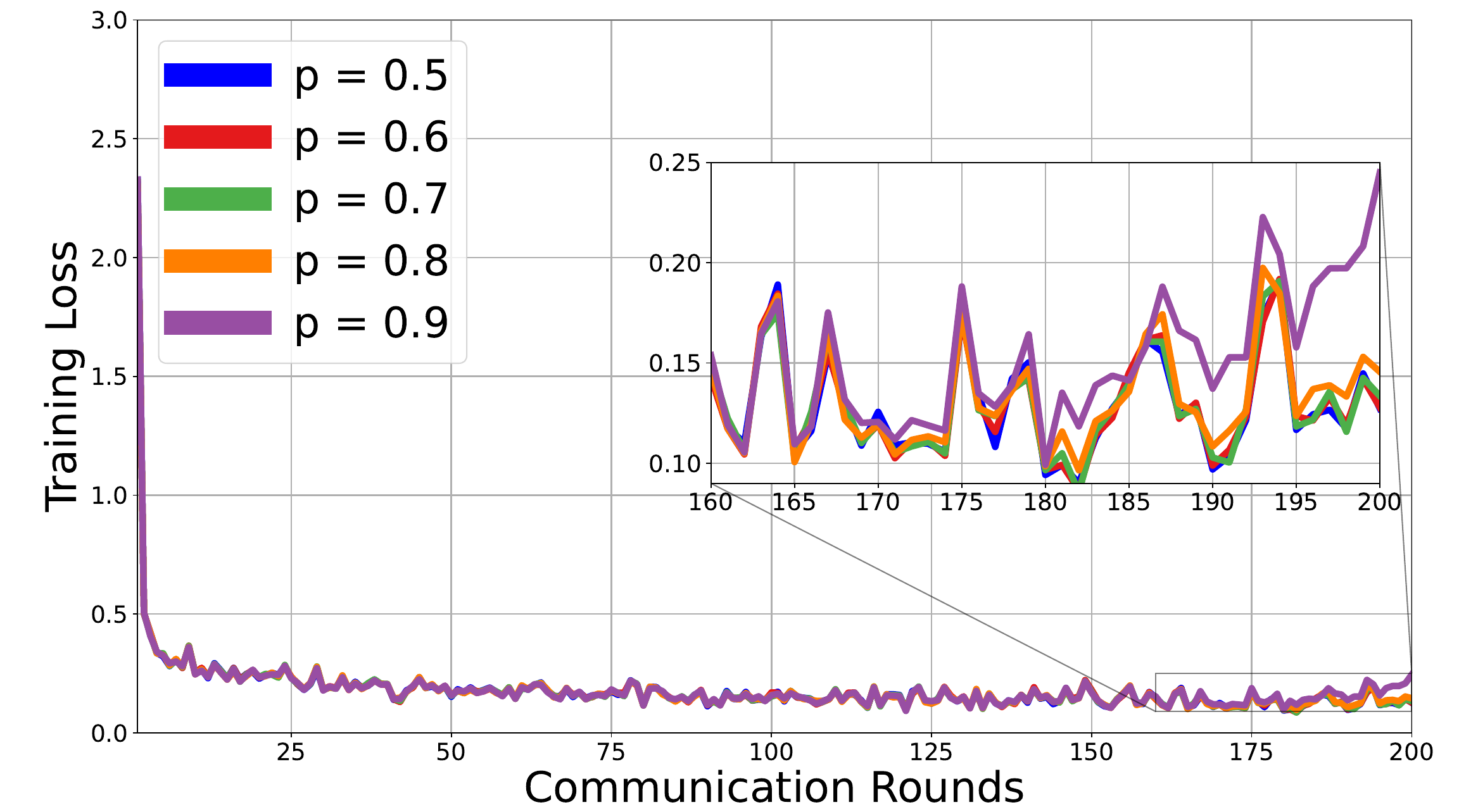}
    \subcaption{$(b)$}
    \label{fig:cl2_fin}
  \end{minipage}
  \hfill
  \begin{minipage}[b]{0.32\linewidth}
    \centering
    \includegraphics[width=\linewidth]{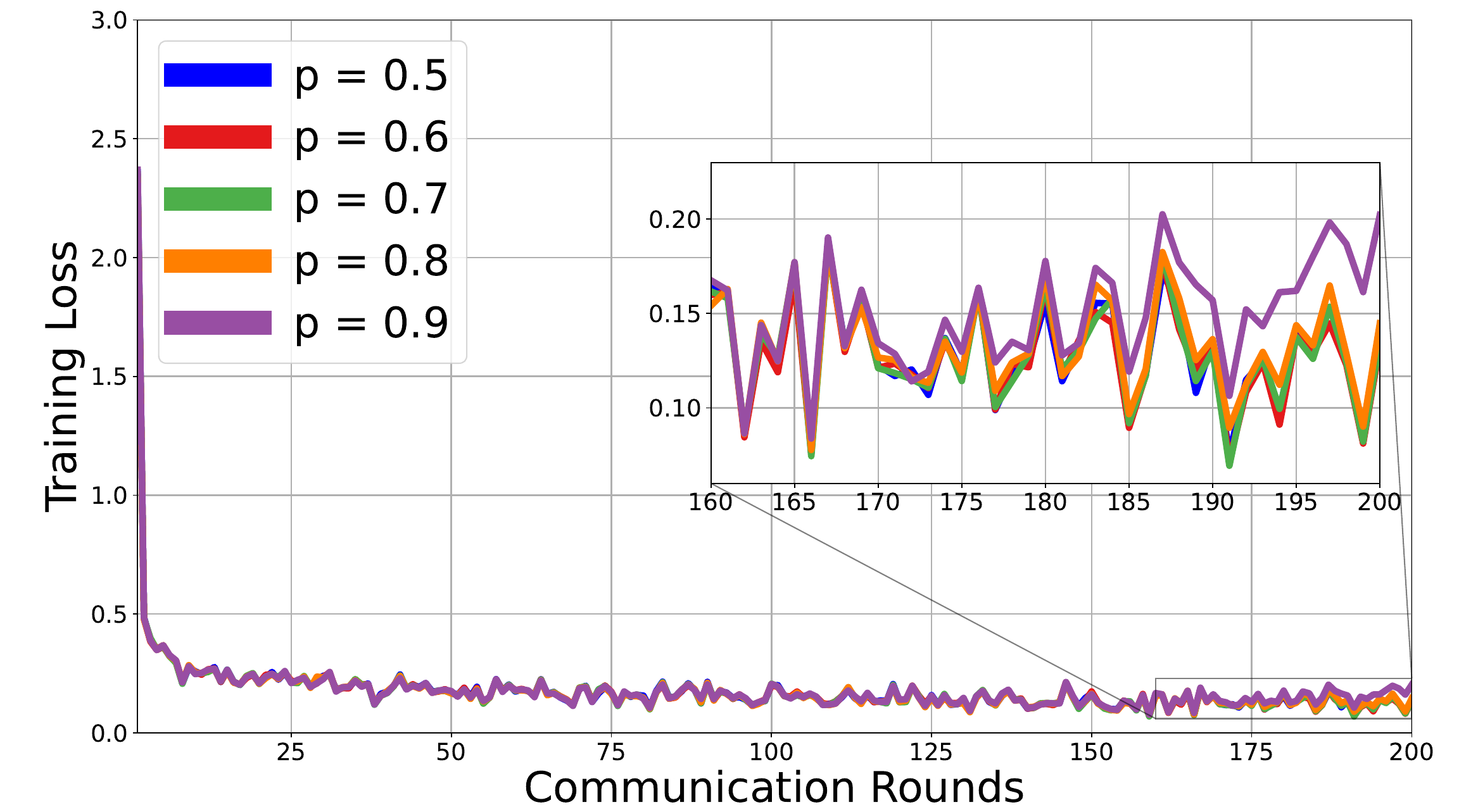}
    \subcaption{$(c)$}
    \label{fig:cl3_fin}
  \end{minipage}

  \caption{
    Training loss of the \textit{meta-llama/Llama-2-13b-hf} model on the 
    \textit{FinGPT/fingpt-sentiment-train} dataset under different pruning rates. 
    Subfigures (a)--(c) correspond to Client 1, Client 2, and Client 3, respectively. 
    Each client is trained on 76,772 financial sentiment samples.}
  
  \label{fig:tr_pruning_rate}
\end{figure*}

We examined the training loss of three clients under pruning rates ranging from 
0.5 to 0.9 (Fig.~\ref{fig:tr_pruning_rate}). The loss remains stable up to a 
pruning rate of 70\%, indicating that model utility is largely preserved. 
Beyond this point, the loss increases noticeably, showing that overly aggressive 
pruning begins to remove essential parameters and hinders convergence.

\begin{table*}[t]
\centering
\caption{ROUGE evaluation results for text reconstruction under different pruning rates using DAGER \cite{petrov2024dager}.}
\label{tab:pruning_rouge_full}
\scriptsize
\renewcommand{\arraystretch}{1.25}

\begin{tabular}{c|ccc|ccc|ccc|ccc|c}
\hline
\textbf{Pruning} & 
\multicolumn{3}{c|}{\textbf{ROUGE-1}} &
\multicolumn{3}{c|}{\textbf{ROUGE-2}} &
\multicolumn{3}{c|}{\textbf{ROUGE-L}} &
\multicolumn{3}{c|}{\textbf{ROUGE-Lsum}} &
\textbf{Average F} \\
\textbf{Rate} &
F & P & R &
F & P & R &
F & P & R &
F & P & R &
$\frac{R1_F + R2_F}{2}$ \\
\hline

No Pruning &
0.5576 & 0.7884 & 0.4925 &
0.5011 & 0.7239 & 0.4470 &
0.5545 & 0.7846 & 0.4897 &
0.5551 & 0.7851 & 0.4904 &
0.5294 \\

0.5 &
0.5878 & 0.8211 & 0.5215 &
0.5380 & 0.7686 & 0.4810 &
0.5858 & 0.8186 & 0.5199 &
0.5861 & 0.8187 & 0.5201 &
0.5629 \\

0.6 &
0.4061 & 0.8857 & 0.2908 &
0.3469 & 0.8393 & 0.2469 &
0.4062 & 0.8849 & 0.2910 &
0.4064 & 0.8852 & 0.2913 &
0.3765 \\

0.7 &
0.4531 & 0.8728 & 0.3441 &
0.3966 & 0.8200 & 0.3013 &
0.4533 & 0.8722 & 0.3444 &
0.4533 & 0.8721 & 0.3445 &
0.4249 \\

\hline
\end{tabular}
\end{table*}

In order to show the impact of pruning, we repeated the original DAGER experiments using different pruning rates. We did not apply any encryption or decryption so that the effect of pruning could be observed in isolation. Table~\ref{tab:pruning_rouge_full} reports the ROUGE-based evaluation of reconstructed text on the Rotten Tomatoes dataset \cite{petrov2024dager}. At a 50\% pruning rate, the F-measure is slightly higher than the non-pruned model. This increase occurs because the reconstructed outputs become shorter and more template-like, which raises precision even though the actual recovery quality is poor. The attacker cannot reproduce full sentences at this level. When the pruning rate increases to 0.6 and 0.7, all ROUGE scores drop, with ROUGE-2 showing the largest decline. This indicates that DAGER struggles to recover meaningful phrases under higher sparsity. Overall, higher pruning rates make gradient inversion attacks significantly less effective.

\begin{center}

\begin{tcolorbox}[colback=gray!10, colframe=black, width=\columnwidth, sharp corners]
\vspace{-0.05in}
\noindent \textbf{Category: Reasoning}

\vspace{0.05in}
\noindent \textbf{Question:} One morning after sunrise, a person was standing facing a pole. The shadow of the pole fell exactly to that person's right. Can you tell me the direction towards which the shadow was pointing — east, south, west, or north? Explain your reasoning steps. To which direction was that person facing? How do you solve this?
\end{tcolorbox}

\vspace{0.5em}

\begin{tcolorbox}[
  colback=pink, colframe=pink, width=\columnwidth,
  boxrule=0pt, arc=2pt, left=6pt, right=6pt, top=6pt, bottom=6pt,
]
\textbf{\faCommentDots\ GPT-4o}\\[0.5em]
In the morning, the sun is in the east, so shadows fall toward the west. Since the shadow of the pole is falling to the person’s right, and the shadow is pointing west, the person must be facing south. Thus, the shadow is pointing west, and the person is facing south.
\end{tcolorbox}

\begin{tcolorbox}[
  colback=yellowgreen, colframe=yellowgreen, width=\columnwidth,
  boxrule=0pt, arc=2pt, left=6pt, right=6pt, top=6pt, bottom=6pt,
]
\textbf{\faCommentDots\ Vanilla-FL}\\[0.5em]
The direction towards which the shadow was pointing is south. The reason for this is that the sun rises in the east and sets in the west, so if the shadow is falling to the right of the person, it means that the sun is in the east, and the person is facing south.

To determine the direction towards which the person was facing, we can use the following steps:

1. Determine the direction of the shadow: The shadow is falling to the right of the person, so the person is facing south.\\
2. Determine the direction of the sun: The sun is in the east, so the person is facing south.\\
3. Confirm the direction: The person is facing south, which is the opposite direction of the sun. This confirms that the person is facing south.
\end{tcolorbox}

\begin{tcolorbox}[
  colback=orange, colframe=orange, width=\columnwidth,
  boxrule=0pt, arc=2pt, left=6pt, right=6pt, top=6pt, bottom=6pt,
]
\textbf{\faCommentDots\ DP-LoRA}\\[0.5em]
The direction towards which the shadow was pointing is south.

To solve this, we need to understand the direction of the sun's movement. The sun rises in the east and sets in the west. During the day, the sun moves from east to west, and its shadow moves in the opposite direction.

Since the shadow of the pole fell to the person's right, it means that the sun was in the east, and the person was facing east.

The person was facing east.
\end{tcolorbox}

\begin{tcolorbox}[
  colback=yellowgreen, colframe=yellowgreen, width=\columnwidth,
  boxrule=0pt, arc=2pt, left=6pt, right=6pt, top=6pt, bottom=6pt,
]
\textbf{\faCommentDots\ FedShield-LLM}\\[0.5em]
The direction towards which the shadow was pointing is south. The reason for this is that the sun rises in the east and sets in the west, so if the shadow is falling to the right of the person, it means that the sun is in the east, and the person is facing south.

To determine the direction towards which the person was facing, we can use the following steps:

1. Identify the direction of the shadow: The shadow is falling to the right of the person, so the person is facing south.\\
2. Identify the direction of the sun: The sun is in the east, so the person is facing south.\\
3. Confirm the direction: The person is facing south, as the sun is in the east and the shadow is falling to the right.
\end{tcolorbox}

\end{center}

Due to space limitations, we provide detailed results for other question categories, including reasoning, generic, knowledge, common-sense, Fermi, and coding tasks, in the supplementary materials. These comparisons emphasize the versatility and consistency of our proposed model across a wide range of domains and question types, demonstrating its superior performance in generating accurate and coherent responses.

\begin{table}[h!]
\centering
\caption{Average BERTScore comparison across different methods while GPT-4o is baseline.}
\begin{tabular}{lccc}
\toprule
\textbf{Model} & \textbf{Precision} & \textbf{Recall} & \textbf{F1 Score} \\
\midrule
Vanilla-FL     & 0.5683 & 0.5867 & 0.5756 \\
DP-LoRA        & 0.6287 & 0.6031 & 0.6130 \\
FedShield-LLM  & \textbf{0.6738} & \textbf{0.7012} & \textbf{0.6865} \\
\bottomrule
\end{tabular}
\label{tab:bertscore_comparison}
\end{table}

For this evaluation, we used seven questions presented in the main paper and supplementary material. The response quality of Vanilla-FL, DP-LoRA, and FedShield-LLM was assessed using BERTScore, which computes semantic similarity between generated responses (candidates) and GPT-4o outputs (references). Specifically, we employed the pre-trained \texttt{bert-base-uncased} model for English text to obtain precision, recall, and F1 scores for each response. As shown in Table~\ref{tab:bertscore_comparison}, FedShield-LLM achieved the highest average F1 score of 0.6865, outperforming both DP-LoRA (0.6130) and Vanilla-FL (0.5756). In terms of precision and recall, FedShield-LLM also led with scores of 0.6738 and 0.7012, respectively, compared to DP-LoRA's 0.6287 precision and 0.6031 recall, and Vanilla-FL’s 0.5683 precision and 0.5867 recall. While GPT-4o a multimodal and significantly larger model achieves a perfect BERTScore self-F1 of 1.0 on its own outputs, these results indicate that FedShield-LLM, a lightweight and privacy-preserving 7B and 13B parameter model, can produce responses that are nearly comparable to GPT-4o’s within the evaluated scope. This underscores the effectiveness of our approach in semantically aligning with high-quality responses on specialized tasks, despite substantial differences in model scale and architecture.

Our proposed model defends against inference attacks as the server can only access encrypted LoRA parameters. In this case, the server has no knowledge of the actual model parameters. However, even if we allow the server to decrypt the model, an honest-but-curious server with access to the model parameters will still be unable to infer sensitive information through gradient inversion or reverse engineering attacks. This is because the attacker will only have access to the sparsified model. Therefore, our proposed model provides robust security against adversaries.

\section{Discussion} \label{sec:Discussion}

The results of this study underscore the effectiveness of FedShield-LLM in enhancing the performance and security of LLMs in FL. With the same hyperparameters and dataset, FedShield-LLM consistently outperformed Vanilla federated LLM and DP-LoRA, achieving lower training loss and generating text of higher quality, nearly comparable to GPT-4o. Our method demonstrated superior text generation across diverse question types and proved to be particularly suitable for cross-silo environments with resource constraints. By leveraging parameter-efficient fine-tuning through LoRA, FedShield-LLM significantly reduces computational and memory requirements, making it practical for environments with limited resources.

Additionally, the integration of FHE with unstructured pruning optimized model parameters while ensuring robust data privacy, addressing critical challenges in secure distributed LLM training. In particular, incorporating FHE enables secure aggregation over encrypted model updates, preventing the server from accessing plaintext LoRA parameters and mitigating risks such as gradient inversion and membership inference attacks. This is especially important for LLMs, where model updates can implicitly encode sensitive information. However, applying FHE directly to full LLM fine-tuning is impractical due to the extremely large number of parameters and the high computational and communication overhead associated with encrypted operations. To make FHE practical in this setting, our framework leverages LoRA-based parameter-efficient fine-tuning together with unstructured pruning to significantly reduce the size and information content of model updates prior to encryption.

To our knowledge, this is the first implementation of such an approach, positioning FedShield-LLM as a robust and efficient framework for sensitive and resource-constrained FL applications. Nevertheless, this approach still introduces some computational complexity due to FHE operations and may be influenced by data heterogeneity across clients. Despite the proposed optimizations, FHE still introduces non-negligible computational cost, memory overhead, and ciphertext expansion, which may affect scalability as model size and the number of participating clients increase. In future work, we will explore more efficient designs to enable the practical application of FHE for full LLM fine-tuning. Future work will focus on reducing this overhead and designing strategies that better accommodate highly non-IID client data distributions. We will also explore more scalable encrypted aggregation designs and lightweight optimization methods to broaden the applicability of FedShield-LLM in real-world deployments.

\section{Conclusion}
In this study, we proposed a secure and efficient mechanism, FedShield-LLM, for fine-tuning LLMs in FL by integrating FHE with unstructured pruning. As part of FHE, the CKKS encryption scheme ensures that model parameters remain encrypted throughout training and aggregation, protecting client data privacy. All model parameters are encrypted layer-wise and shared with the server to keep them secure during the communication and aggregation process. Unstructured pruning enhances security by deactivating less significant weights in LoRA parameters, reducing the attack surface and mitigating risks from inference attacks. Experimental results demonstrate that the proposed framework outperforms existing methods in text generation performance, while maintaining robust privacy guarantees and computational efficiency. This makes the approach suitable for real-world applications in sensitive domains. Future work will focus on addressing other categories of adversarial attacks during the training phase to further enhance the robustness of the framework in distributed environments.

\section*{Acknowledgement}
This work is based upon the work supported by the National Center for Transportation Cybersecurity and Resiliency (TraCR) (a U.S. Department of Transportation National University Transportation Center) headquartered at Clemson University, Clemson, South Carolina, USA. Any opinions, findings, conclusions, and recommendations expressed in this material are those of the author(s) and do not necessarily reflect the views of TraCR, and the U.S. Government assumes no liability for the contents or use thereof.

\section*{Disclaimer} This paper presents research findings on privacy-preserving federated learning with large language models and is intended solely for academic and research purposes. The methods and results described should not be interpreted as commercial, legal, or professional advice regarding the deployment of AI systems. Readers are encouraged to exercise caution when applying these techniques, particularly in sensitive or high-stakes contexts, as experimental performance may not translate to real-world effectiveness or safety. The authors’ claims and conclusions reflect their own analysis of the data and do not necessarily represent the views of their institutions, sponsors, or the broader research community. The authors disclaim any responsibility for any downstream use or misuse of the proposed methods.

\bibliographystyle{IEEEtran}
\bibliography{references}

\begin{thebibliography}{10}
\providecommand{\url}[1]{#1}
\csname url@samestyle\endcsname
\providecommand{\newblock}{\relax}
\providecommand{\bibinfo}[2]{#2}
\providecommand{\BIBentrySTDinterwordspacing}{\spaceskip=0pt\relax}
\providecommand{\BIBentryALTinterwordstretchfactor}{4}
\providecommand{\BIBentryALTinterwordspacing}{\spaceskip=\fontdimen2\font plus
\BIBentryALTinterwordstretchfactor\fontdimen3\font minus \fontdimen4\font\relax}
\providecommand{\BIBforeignlanguage}[2]{{%
\expandafter\ifx\csname l@#1\endcsname\relax
\typeout{** WARNING: IEEEtran.bst: No hyphenation pattern has been}%
\typeout{** loaded for the language `#1'. Using the pattern for}%
\typeout{** the default language instead.}%
\else
\language=\csname l@#1\endcsname
\fi
#2}}
\providecommand{\BIBdecl}{\relax}
\BIBdecl

\bibitem{brown2020language}
\BIBentryALTinterwordspacing
T.~Brown \emph{et~al.}, ``Language models are few-shot learners,'' in \emph{Advances in Neural Information Processing Systems}, H.~Larochelle, M.~Ranzato, R.~Hadsell, M.~Balcan, and H.~Lin, Eds., vol.~33.\hskip 1em plus 0.5em minus 0.4em\relax Curran Associates, Inc., 2020, pp. 1877--1901. [Online]. Available: \url{https://proceedings.neurips.cc/paper_files/paper/2020/file/1457c0d6bfcb4967418bfb8ac142f64a-Paper.pdf}
\BIBentrySTDinterwordspacing

\bibitem{chowdhery2023palm}
A.~Chowdhery \emph{et~al.}, ``Palm: Scaling language modeling with pathways,'' \emph{Journal of Machine Learning Research}, vol.~24, no. 240, pp. 1--113, 2023.

\bibitem{OpenAI2023}
{OpenAI}, ``Gpt-4 technical report,'' https://openai.com/research/gpt-4, 2023.

\bibitem{baik2020data}
J.~S. Baik, ``Data privacy against innovation or against discrimination?: The case of the california consumer privacy act (ccpa),'' \emph{Telematics and Informatics}, vol.~52, 2020.

\bibitem{boyne2018data}
S.~M. Boyne, ``Data protection in the united states,'' \emph{The American Journal of Comparative Law}, vol.~66, no. suppl\_1, pp. 299--343, 2018.

\bibitem{das2024security}
\BIBentryALTinterwordspacing
B.~C. Das, M.~H. Amini, and Y.~Wu, ``Security and privacy challenges of large language models: A survey,'' \emph{ACM Comput. Surv.}, vol.~57, no.~6, Feb. 2025. [Online]. Available: \url{https://doi.org/10.1145/3712001}
\BIBentrySTDinterwordspacing

\bibitem{carlini2021extracting}
\BIBentryALTinterwordspacing
N.~Carlini, F.~Tram{\`e}r, E.~Wallace, M.~Jagielski, A.~Herbert-Voss, K.~Lee, A.~Roberts, T.~Brown, D.~Song, {\'U}.~Erlingsson, A.~Oprea, and C.~Raffel, ``Extracting training data from large language models,'' in \emph{30th USENIX Security Symposium (USENIX Security 21)}.\hskip 1em plus 0.5em minus 0.4em\relax USENIX Association, Aug. 2021, pp. 2633--2650. [Online]. Available: \url{https://www.usenix.org/conference/usenixsecurity21/presentation/carlini-extracting}
\BIBentrySTDinterwordspacing

\bibitem{LambdaLabs2020}
C.~Li, ``Openai’s gpt-3 language model: A technical overview,'' \emph{Blog Post}, 2020.

\bibitem{touvron2023llama}
H.~Touvron \emph{et~al.}, ``Llama: Open and efficient foundation language models,'' \emph{arXiv preprint arXiv:2302.13971}, 2023.

\bibitem{mcmahan2017communication}
\BIBentryALTinterwordspacing
B.~McMahan \emph{et~al.}, ``{Communication-Efficient Learning of Deep Networks from Decentralized Data},'' in \emph{Proceedings of the 20th International Conference on Artificial Intelligence and Statistics}, ser. Proceedings of Machine Learning Research, A.~Singh and J.~Zhu, Eds., vol.~54.\hskip 1em plus 0.5em minus 0.4em\relax PMLR, 20--22 Apr 2017, pp. 1273--1282. [Online]. Available: \url{https://proceedings.mlr.press/v54/mcmahan17a.html}
\BIBentrySTDinterwordspacing

\bibitem{amini2025distributed}
H.~Amini \emph{et~al.}, ``Distributed {LLM}s and multimodal large language models: A survey on advances, challenges, and future directions,'' \emph{arXiv preprint arXiv:2503.16585}, 2025.

\bibitem{imteaj2021survey}
A.~Imteaj \emph{et~al.}, ``A survey on federated learning for resource-constrained iot devices,'' \emph{IEEE Internet of Things Journal}, vol.~9, no.~1, pp. 1--24, 2022.

\bibitem{r3_ye2024openfedllm}
\BIBentryALTinterwordspacing
R.~Ye \emph{et~al.}, ``Openfedllm: Training large language models on decentralized private data via federated learning,'' in \emph{Proceedings of the 30th ACM SIGKDD Conference on Knowledge Discovery and Data Mining}, ser. KDD '24.\hskip 1em plus 0.5em minus 0.4em\relax New York, NY, USA: Association for Computing Machinery, 2024, p. 6137–6147. [Online]. Available: \url{https://doi.org/10.1145/3637528.3671582}
\BIBentrySTDinterwordspacing

\bibitem{r2_kuang2024federatedscope}
\BIBentryALTinterwordspacing
W.~Kuang \emph{et~al.}, ``Federatedscope-llm: A comprehensive package for fine-tuning large language models in federated learning,'' in \emph{Proceedings of the 30th ACM SIGKDD Conference on Knowledge Discovery and Data Mining}, ser. KDD '24.\hskip 1em plus 0.5em minus 0.4em\relax New York, NY, USA: Association for Computing Machinery, 2024, p. 5260–5271. [Online]. Available: \url{https://doi.org/10.1145/3637528.3671573}
\BIBentrySTDinterwordspacing

\bibitem{zhu2019deep}
\BIBentryALTinterwordspacing
L.~Zhu, Z.~Liu, and S.~Han, ``Deep leakage from gradients,'' in \emph{Advances in Neural Information Processing Systems}, H.~Wallach, H.~Larochelle, A.~Beygelzimer, F.~d\textquotesingle Alch\'{e}-Buc, E.~Fox, and R.~Garnett, Eds., vol.~32.\hskip 1em plus 0.5em minus 0.4em\relax Curran Associates, Inc., 2019. [Online]. Available: \url{https://proceedings.neurips.cc/paper_files/paper/2019/file/60a6c4002cc7b29142def8871531281a-Paper.pdf}
\BIBentrySTDinterwordspacing

\bibitem{gu2022cs}
\BIBentryALTinterwordspacing
Y.~Gu, Y.~Bai, and S.~Xu, ``Cs-mia: Membership inference attack based on prediction confidence series in federated learning,'' \emph{Journal of Information Security and Applications}, vol.~67, p. 103201, 2022. [Online]. Available: \url{https://www.sciencedirect.com/science/article/pii/S2214212622000801}
\BIBentrySTDinterwordspacing

\bibitem{petrov2024dager}
\BIBentryALTinterwordspacing
I.~Petrov \emph{et~al.}, ``Dager: Exact gradient inversion for large language models,'' in \emph{Advances in Neural Information Processing Systems}, A.~Globerson, L.~Mackey, D.~Belgrave, A.~Fan, U.~Paquet, J.~Tomczak, and C.~Zhang, Eds., vol.~37.\hskip 1em plus 0.5em minus 0.4em\relax Curran Associates, Inc., 2024, pp. 87\,801--87\,830. [Online]. Available: \url{https://proceedings.neurips.cc/paper_files/paper/2024/file/9ff1577a1f8308df1ccea6b4f64a103f-Paper-Conference.pdf}
\BIBentrySTDinterwordspacing

\bibitem{hu2022lora}
E.~J. Hu \emph{et~al.}, ``Lora: Low-rank adaptation of large language models.'' \emph{ICLR}, vol.~1, no.~2, p.~3, 2022.

\bibitem{wang2024flora}
\BIBentryALTinterwordspacing
Z.~Wang \emph{et~al.}, ``Flora: Federated fine-tuning large language models with heterogeneous low-rank adaptations,'' in \emph{Advances in Neural Information Processing Systems}, A.~Globerson, L.~Mackey, D.~Belgrave, A.~Fan, U.~Paquet, J.~Tomczak, and C.~Zhang, Eds., vol.~37.\hskip 1em plus 0.5em minus 0.4em\relax Curran Associates, Inc., 2024, pp. 22\,513--22\,533. [Online]. Available: \url{https://proceedings.neurips.cc/paper_files/paper/2024/file/28312c9491d60ed0c77f7fff4ad86dd1-Paper-Conference.pdf}
\BIBentrySTDinterwordspacing

\bibitem{dwork2014algorithmic}
\BIBentryALTinterwordspacing
C.~Dwork and A.~Roth, ``The algorithmic foundations of differential privacy,'' \emph{Foundations and Trends® in Theoretical Computer Science}, vol.~9, no. 3–4, pp. 211--407, 2014. [Online]. Available: \url{http://dx.doi.org/10.1561/0400000042}
\BIBentrySTDinterwordspacing

\bibitem{liu2023differentially}
\BIBentryALTinterwordspacing
X.-Y. Liu \emph{et~al.}, ``Differentially private low-rank adaptation of large language model using federated learning,'' \emph{ACM Trans. Manage. Inf. Syst.}, vol.~16, no.~2, Mar. 2025. [Online]. Available: \url{https://doi.org/10.1145/3682068}
\BIBentrySTDinterwordspacing

\bibitem{xu2024dp}
J.~Xu \emph{et~al.}, ``Dp-dylora: Fine-tuning transformer-based models on-device under differentially private federated learning using dynamic low-rank adaptation,'' \emph{arXiv preprint arXiv:2405.06368}, 2024.

\bibitem{cheon2017homomorphic}
J.~H. Cheon \emph{et~al.}, ``Homomorphic encryption for arithmetic of approximate numbers,'' in \emph{Advances in Cryptology -- ASIACRYPT 2017}, T.~Takagi and T.~Peyrin, Eds.\hskip 1em plus 0.5em minus 0.4em\relax Cham: Springer International Publishing, 2017, pp. 409--437.

\bibitem{bonawitz2017practical}
\BIBentryALTinterwordspacing
K.~Bonawitz \emph{et~al.}, ``Practical secure aggregation for privacy-preserving machine learning,'' in \emph{Proceedings of the 2017 ACM SIGSAC Conference on Computer and Communications Security}, ser. CCS '17.\hskip 1em plus 0.5em minus 0.4em\relax New York, NY, USA: Association for Computing Machinery, 2017, p. 1175–1191. [Online]. Available: \url{https://doi.org/10.1145/3133956.3133982}
\BIBentrySTDinterwordspacing

\bibitem{wu2021adversarial}
\BIBentryALTinterwordspacing
D.~Wu and Y.~Wang, ``Adversarial neuron pruning purifies backdoored deep models,'' in \emph{Advances in Neural Information Processing Systems}, M.~Ranzato, A.~Beygelzimer, Y.~Dauphin, P.~Liang, and J.~W. Vaughan, Eds., vol.~34.\hskip 1em plus 0.5em minus 0.4em\relax Curran Associates, Inc., 2021, pp. 16\,913--16\,925. [Online]. Available: \url{https://proceedings.neurips.cc/paper_files/paper/2021/file/8cbe9ce23f42628c98f80fa0fac8b19a-Paper.pdf}
\BIBentrySTDinterwordspacing

\bibitem{zhang2024privacyasst}
X.~Zhang \emph{et~al.}, ``Privacyasst: Safeguarding user privacy in tool-using large language model agents,'' \emph{IEEE Transactions on Dependable and Secure Computing}, vol.~21, no.~6, pp. 5242--5258, 2024.

\bibitem{tong2025inferdpt}
M.~Tong \emph{et~al.}, ``Inferdpt: Privacy-preserving inference for closed-box large language models,'' \emph{IEEE Transactions on Dependable and Secure Computing}, vol.~22, no.~5, pp. 4625--4640, 2025.

\bibitem{holtzman2020curious}
A.~Holtzman, J.~Buys, L.~Du, M.~Forbes, and Y.~Choi, ``The curious case of neural text degeneration,'' in \emph{International Conference on Learning Representations}, 2020.

\bibitem{nguyen2025turning}
M.~N. Nguyen, A.~Baker, C.~Neo, A.~G. Roush, A.~Kirsch, and R.~Shwartz-Ziv, ``Turning up the heat: Min-p sampling for creative and coherent llm outputs,'' in \emph{International Conference on Learning Representations}, 2025.

\bibitem{davoodi2026geometry}
A.~G. Davoodi, N.~Rezazadeh, S.~P.~M. Davoudi, and P.~Pezeshkpour, ``Geometry-aware decoding with wasserstein-regularized truncation and mass penalties for large language models,'' \emph{arXiv preprint arXiv:2602.10346}, 2026.

\bibitem{r4_wu2024fedbiot}
\BIBentryALTinterwordspacing
F.~o. Wu, ``Fedbiot: Llm local fine-tuning in federated learning without full model,'' in \emph{Proceedings of the 30th ACM SIGKDD Conference on Knowledge Discovery and Data Mining}, ser. KDD '24.\hskip 1em plus 0.5em minus 0.4em\relax New York, NY, USA: Association for Computing Machinery, 2024, p. 3345–3355. [Online]. Available: \url{https://doi.org/10.1145/3637528.3671897}
\BIBentrySTDinterwordspacing

\bibitem{bai2024federated}
\BIBentryALTinterwordspacing
J.~Bai \emph{et~al.}, ``Federated fine-tuning of large language models under heterogeneous tasks and client resources,'' in \emph{Advances in Neural Information Processing Systems}, A.~Globerson, L.~Mackey, D.~Belgrave, A.~Fan, U.~Paquet, J.~Tomczak, and C.~Zhang, Eds., vol.~37.\hskip 1em plus 0.5em minus 0.4em\relax Curran Associates, Inc., 2024, pp. 14\,457--14\,483. [Online]. Available: \url{https://proceedings.neurips.cc/paper_files/paper/2024/file/1a134b50202088aa8c595cc99b310e5a-Paper-Conference.pdf}
\BIBentrySTDinterwordspacing

\bibitem{fan2024fedcollm}
T.~Fan \emph{et~al.}, ``Fedcollm: A parameter-efficient federated co-tuning framework for large and small language models,'' \emph{arXiv preprint arXiv:2411.11707}, 2024.

\bibitem{bonawitz2016practical}
K.~Bonawitz \emph{et~al.}, ``Practical secure aggregation for federated learning on user-held data,'' \emph{arXiv preprint arXiv:1611.04482}, 2016.

\bibitem{hinton2015distilling}
G.~Hinton, O.~Vinyals, and J.~Dean, ``Distilling the knowledge in a neural network,'' \emph{arXiv preprint arXiv:1503.02531}, 2015.

\bibitem{yun2023privacy}
S.~Yun \emph{et~al.}, ``Privacy-preserving federated learning through clustered sampling on fine-tuning distributed non-iid large language models,'' in \emph{2023 IEEE Intl Conf on Parallel \& Distributed Processing with Applications, Big Data \& Cloud Computing, Sustainable Computing \& Communications, Social Computing \& Networking (ISPA/BDCloud/SocialCom/SustainCom)}, 2023, pp. 531--538.

\bibitem{wu2025survey}
Y.~Wu \emph{et~al.}, ``A survey on federated fine-tuning of large language models,'' \emph{arXiv preprint arXiv:2503.12016}, 2025.

\bibitem{yu2021differentially}
D.~Yu \emph{et~al.}, ``Differentially private fine-tuning of language models,'' \emph{arXiv preprint arXiv:2110.06500}, 2021.

\bibitem{li2024privtuner}
Y.~Li, W.~Yu, and J.~Zhao, ``Privtuner with homomorphic encryption and lora: A p3eft scheme for privacy-preserving parameter-efficient fine-tuning of ai foundation models,'' \emph{IEEE Transactions on Wireless Communications}, pp. 1--1, 2025.

\bibitem{fotohi2024decentralized}
\BIBentryALTinterwordspacing
R.~Fotohi, F.~{Shams Aliee}, and B.~Farahani, ``Decentralized and robust privacy-preserving model using blockchain-enabled federated deep learning in intelligent enterprises,'' \emph{Applied Soft Computing}, vol. 161, p. 111764, 2024. [Online]. Available: \url{https://www.sciencedirect.com/science/article/pii/S1568494624005386}
\BIBentrySTDinterwordspacing

\bibitem{fotohi2024lightweight}
R.~Fotohi \emph{et~al.}, ``A lightweight and secure deep learning model for privacy-preserving federated learning in intelligent enterprises,'' \emph{IEEE Internet of Things Journal}, vol.~11, no.~19, pp. 31\,988--31\,998, 2024.

\bibitem{geiping2020inverting}
\BIBentryALTinterwordspacing
J.~Geiping \emph{et~al.}, ``Inverting gradients - how easy is it to break privacy in federated learning?'' in \emph{Advances in Neural Information Processing Systems}, H.~Larochelle, M.~Ranzato, R.~Hadsell, M.~Balcan, and H.~Lin, Eds., vol.~33.\hskip 1em plus 0.5em minus 0.4em\relax Curran Associates, Inc., 2020, pp. 16\,937--16\,947. [Online]. Available: \url{https://proceedings.neurips.cc/paper_files/paper/2020/file/c4ede56bbd98819ae6112b20ac6bf145-Paper.pdf}
\BIBentrySTDinterwordspacing

\bibitem{zhang2025federated}
L.~Zhang and Y.~Li, ``Federated learning with layer skipping: Efficient training of large language models for healthcare nlp,'' \emph{arXiv preprint arXiv:2504.10536}, 2025.

\bibitem{liu2023efficient}
\BIBentryALTinterwordspacing
T.~Liu \emph{et~al.}, ``Efficient and secure federated learning for financial applications,'' \emph{Applied Sciences}, vol.~13, no.~10, 2023. [Online]. Available: \url{https://www.mdpi.com/2076-3417/13/10/5877}
\BIBentrySTDinterwordspacing

\bibitem{zhang2024federated}
W.~Zhang \emph{et~al.}, ``Federated intelligence for intelligent vehicles,'' \emph{IEEE Transactions on Intelligent Vehicles}, vol.~9, no.~5, pp. 4835--4839, 2024.

\bibitem{mia2024quancrypt}
M.~J. Mia and M.~H. Amini, ``Quancrypt-fl: Quantized homomorphic encryption with pruning for secure federated learning,'' \emph{IEEE Transactions on Artificial Intelligence}, pp. 1--16, 2025.

\bibitem{li2020federated}
\BIBentryALTinterwordspacing
T.~Li \emph{et~al.}, ``Federated optimization in heterogeneous networks,'' in \emph{Proceedings of Machine Learning and Systems}, I.~Dhillon, D.~Papailiopoulos, and V.~Sze, Eds., vol.~2, 2020, pp. 429--450. [Online]. Available: \url{https://proceedings.mlsys.org/paper_files/paper/2020/file/1f5fe83998a09396ebe6477d9475ba0c-Paper.pdf}
\BIBentrySTDinterwordspacing

\bibitem{yu2019parallel}
\BIBentryALTinterwordspacing
H.~Yu, S.~Yang, and S.~Zhu, ``Parallel restarted sgd with faster convergence and less communication: Demystifying why model averaging works for deep learning,'' \emph{Proceedings of the AAAI Conference on Artificial Intelligence}, vol.~33, no.~01, pp. 5693--5700, Jul. 2019. [Online]. Available: \url{https://ojs.aaai.org/index.php/AAAI/article/view/4514}
\BIBentrySTDinterwordspacing

\bibitem{reddi2020adaptive}
S.~Reddi \emph{et~al.}, ``Adaptive federated optimization,'' \emph{arXiv preprint arXiv:2003.00295}, 2020.

\bibitem{ghadimi2013stochastic}
S.~Ghadimi and G.~Lan, ``Stochastic first-and zeroth-order methods for nonconvex stochastic programming,'' \emph{SIAM journal on optimization}, vol.~23, no.~4, pp. 2341--2368, 2013.

\bibitem{alistarh2017qsgd}
D.~Alistarh \emph{et~al.}, ``Qsgd: Communication-efficient sgd via gradient quantization and encoding,'' \emph{Advances in neural information processing systems}, vol.~30, 2017.

\bibitem{alistarh2018convergence}
D.~Alistarh, T.~Hoefler, M.~Johansson, N.~Konstantinov, S.~Khirirat, and C.~Renggli, ``The convergence of sparsified gradient methods,'' \emph{Advances in Neural Information Processing Systems}, vol.~31, 2018.

\bibitem{wangni2018gradient}
J.~Wangni \emph{et~al.}, ``Gradient sparsification for communication-efficient distributed optimization,'' \emph{Advances in Neural Information Processing Systems}, vol.~31, 2018.

\bibitem{lin2017deep}
Y.~Lin \emph{et~al.}, ``Deep gradient compression: Reducing the communication bandwidth for distributed training,'' \emph{arXiv preprint arXiv:1712.01887}, 2017.

\bibitem{khaled2020tighter}
A.~Khaled, K.~Mishchenko, and P.~Richt{\'a}rik, ``Tighter theory for local sgd on identical and heterogeneous data,'' in \emph{International conference on artificial intelligence and statistics}.\hskip 1em plus 0.5em minus 0.4em\relax PMLR, 2020, pp. 4519--4529.

\bibitem{bottou2018optimization}
L.~Bottou, F.~E. Curtis, and J.~Nocedal, ``Optimization methods for large-scale machine learning,'' \emph{SIAM review}, vol.~60, no.~2, pp. 223--311, 2018.

\bibitem{karimi2016linear}
H.~Karimi, J.~Nutini, and M.~Schmidt, ``Linear convergence of gradient and proximal-gradient methods under the polyak-{\l}ojasiewicz condition,'' in \emph{Joint European conference on machine learning and knowledge discovery in databases}.\hskip 1em plus 0.5em minus 0.4em\relax Springer, 2016, pp. 795--811.

\bibitem{singh2022federated}
\BIBentryALTinterwordspacing
P.~Singh \emph{et~al.}, \emph{Federated Learning: Challenges, Methods, and Future Directions}.\hskip 1em plus 0.5em minus 0.4em\relax Cham: Springer International Publishing, 2022, pp. 199--214. [Online]. Available: \url{https://doi.org/10.1007/978-3-030-85559-8_13}
\BIBentrySTDinterwordspacing

\bibitem{benaissa2021tenseal}
A.~Benaissa \emph{et~al.}, ``Tenseal: A library for encrypted tensor operations using homomorphic encryption,'' \emph{arXiv preprint arXiv:2104.03152}, 2021.

\bibitem{taori2023stanford}
R.~Taori \emph{et~al.}, ``Stanford alpaca: An instruction-following llama model,'' 2023.

\bibitem{liu2023fingpt}
X.-Y. Liu \emph{et~al.}, ``Fingpt: Democratizing internet-scale data for financial large language models,'' \emph{arXiv preprint arXiv:2307.10485}, 2023.

\bibitem{yue2023mammoth}
X.~Yue \emph{et~al.}, ``Mammoth: Building math generalist models through hybrid instruction tuning,'' \emph{arXiv preprint arXiv:2309.05653}, 2023.

\bibitem{han2023medalpaca}
T.~Han \emph{et~al.}, ``Medalpaca--an open-source collection of medical conversational ai models and training data,'' \emph{arXiv preprint arXiv:2304.08247}, 2023.

\end{thebibliography}

\begin{IEEEbiography}[{\includegraphics[width=1in,height=1.25in,clip,keepaspectratio]{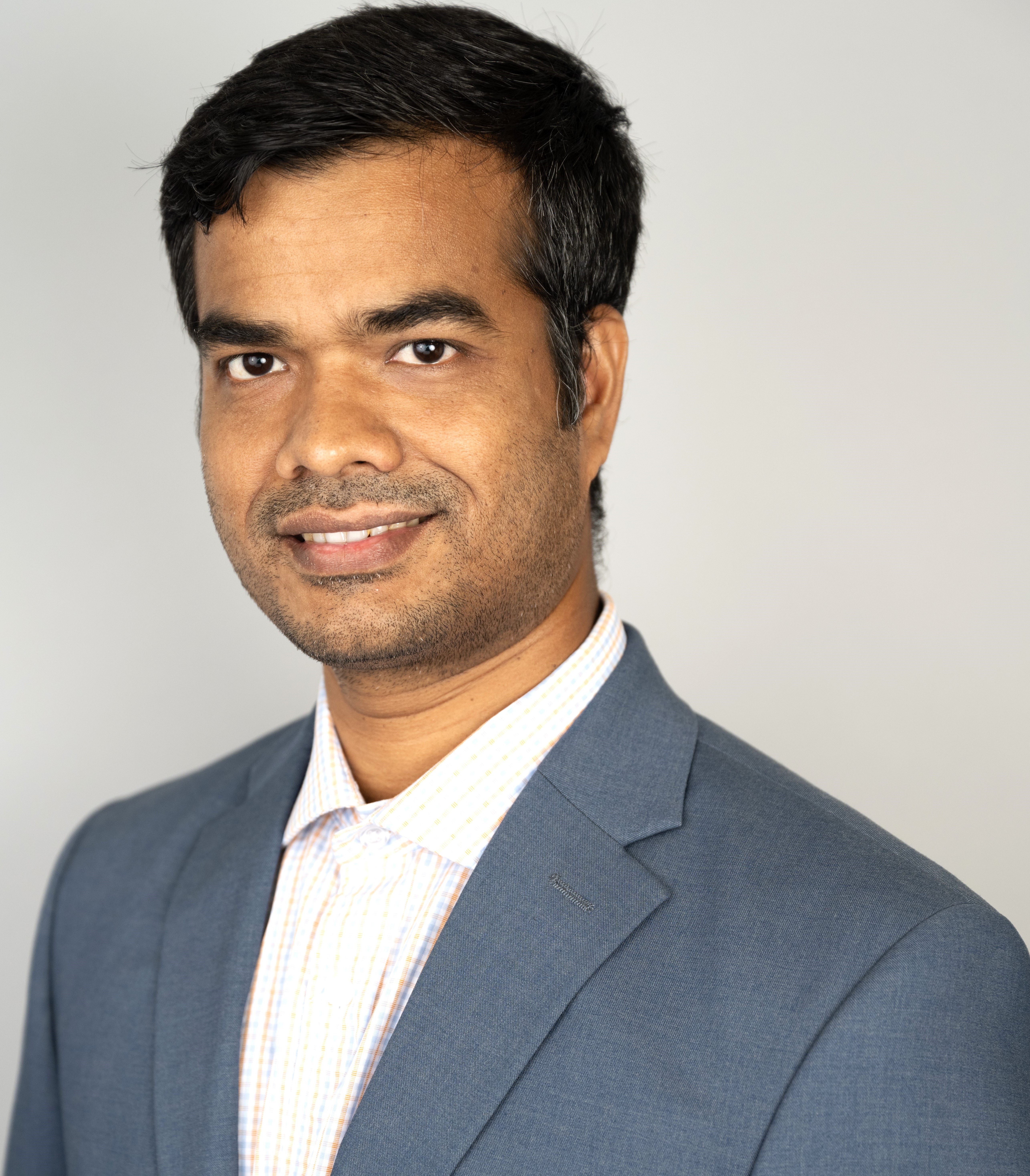}}]
{Md Jueal Mia}~ is pursuing his PhD in Computer Science at Security, Optimization, and Learning for InterDependent networks lab (solid lab), Knight Foundation School of Computing and Information Sciences, Florida International University (FIU) under supervision of Dr. M. Hadi Amini. 
Prior to that, he
received the B.S. and M.S. degrees in computer science and engineering from Jahangirnagar University, Bangladesh. He has published more than 30 peer-reviewed journal and conference publications. Before joining FIU, he served as a Faculty Member with the Department of Computer Science and Engineering, Daffodil International University, Dhaka, Bangladesh, for more than six years. His research interests include privacy and security issues in federated learning, distributed machine learning, and  large language models. 
\end{IEEEbiography}

\begin{IEEEbiography}[{\includegraphics[width=1in,height=1.25in,clip,keepaspectratio]{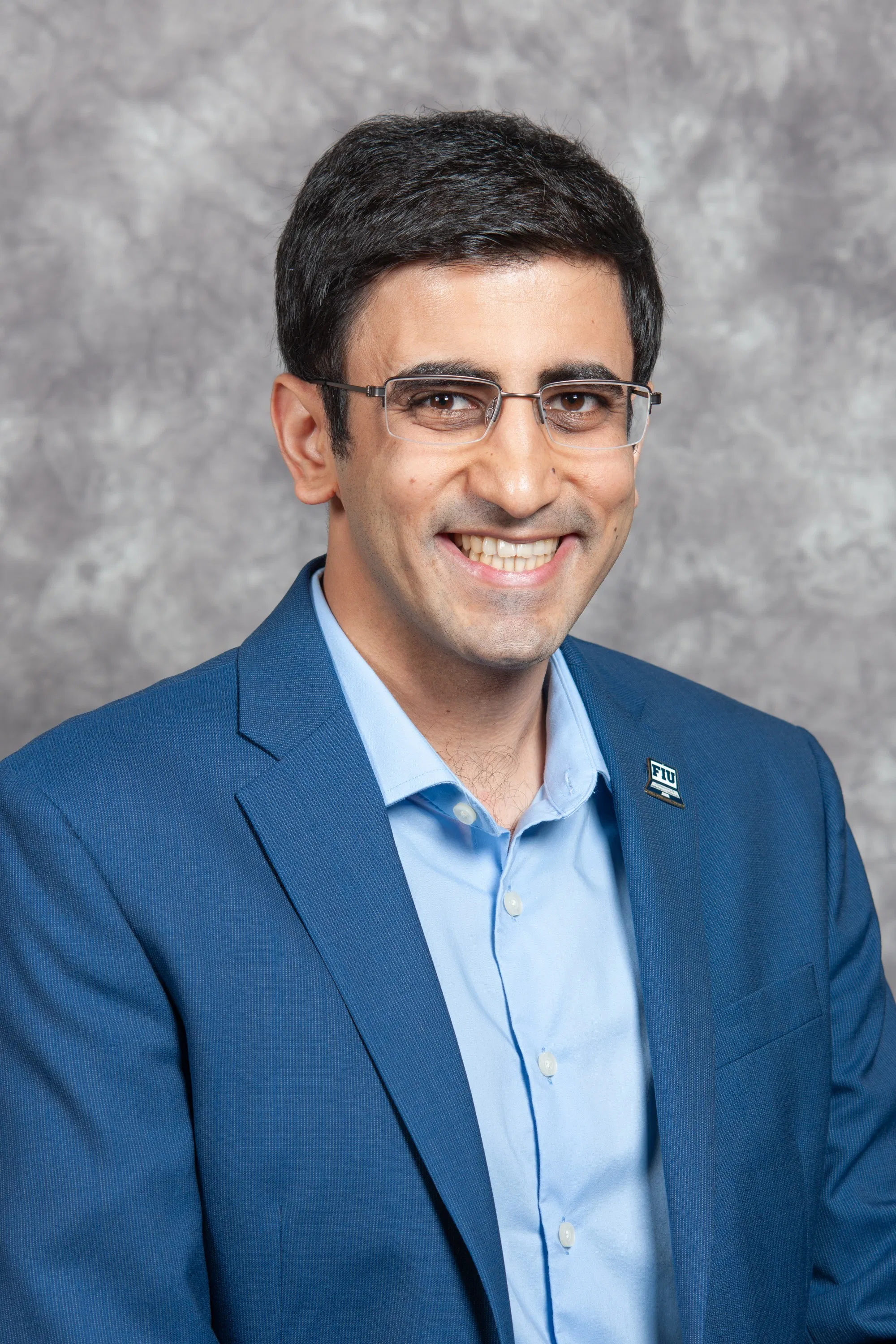}}]{M. Hadi Amini (Senior Member ACM, Senior Member IEEE) }      is an Associate Professor of Computer Science in the Knight Foundation School of Computing and Information Sciences (KFSCIS) and founding director of Security, Optimization, and Learning for InterDependent Networks Laboratory (www.solidlab.network) at Florida International University (FIU). He also serves as  Associate Director of the USDOT National University Transportation Center for Transportation Cyber Security and Resiliency. Prior to that, he received his Ph.D. in Electrical and Computer Engineering from Carnegie Mellon University in 2019. He conducts research in (decentralized) learning and optimization algorithms, their security and privacy vulnerabilities, and applications in cyber-physical systems security and  resilience. He is the recipient of  ``2025 IEEE Big Data Security Junior Research Award'' for contributions to Big Data Security in Cyber-Physical Systems, Best Paper Award from ``2019 IEEE Conference on Computational Science \& Computational Intelligence'', 2021 Best Journal Paper Award from ``Springer Nature Operations Research Forum Journal'', 2024 “FIU Top Scholar Award, Research and Creative Activities, Junior Faculty with Significant Grants (Sciences)”, 2025 ``FIU Knight Foundation School of Computing and Information Sciences Excellence in Applied Research Award'', and 2023 FIU “Faculty Senate Excellence in Teaching Award”, among others. He was also selected as one of the 2025 Rising Star (Sciences) by the Academy of Science, Engineering and Medicine of Florida (ASEMFL). He is a Senior Member of ACM and IEEE. He serves as an Associate Editor of \textit{IEEE Transactions on Information Forensics and Security} and \textit{IEEE Transactions on Machine Learning in Communications and Networking}. 
 (Homepage: www.hadiamini.com)
\end{IEEEbiography}

\end{document}